\newcommand{\N}{\ensuremath{\mathbb{N}}}
\newcommand{\R}{\ensuremath{\mathbb{R}}}
\newcommand{\Z}{\ensuremath{\mathbb{Z}}}
\newcommand{\E}{\ensuremath{\mathbb{E}}}
\renewcommand{\P}{\ensuremath{\mathbb{P}}}
\newcommand{\eps}{\ensuremath{\varepsilon}}
\newcommand{\ind}[1]{\ensuremath{\mathbbm{1}_{\left\{#1\right\}}}}
\newcommand{\cal}[1]{\ensuremath{\mathcal{#1}}}
\newcommand\croc[1]{\left\langle #1\right\rangle}
\def\var{\mathrm{var}}
\newcommand{\diff}{\mathop{}\mathopen{}\mathrm{d}}
\newtheorem{definition}{Definition}
\newtheorem{proposition}{Proposition}
\newtheorem{corollary}{Corollary}[proposition]
\newtheorem{lemma}{Lemma}
\newtheorem{theorem}{Theorem}
\title[Autoregulation of Gene Expression]{A Stochastic Analysis of Autoregulation of Gene Expression}
\date{\today}
\author[R. Dessalles]{Renaud Dessalles}
\email{Renaud.Dessalles@inria.fr}
\author[V. Fromion]{Vincent Fromion}
\email{Vincent.Fromion@jouy.inra.fr}
\address[R. Dessalles, V. Fromion]{MaIAGE, INRA,
Domaine de Vilvert, 78350 Jouy-en-Josas, France}
\author[Ph. Robert]{Philippe Robert}
\email{Philippe.Robert@inria.fr}
\urladdr{http://www-rocq.inria.fr/\string~robert}
\address[R. Dessalles,Ph. Robert]{INRIA Paris---Rocquencourt, Domaine de Voluceau, 78153 Le Chesnay, France}
\keywords{}
\date{\today}
\begin{document}

\begin{abstract}
This paper  analyzes,  in the context of  a prokaryotic cell,  the stochastic variability of  the number  of proteins when there is a control of gene expression by an autoregulation scheme.  The goal of this work is to estimate the efficiency of the regulation  to limit the  fluctuations of the number of copies of a given protein. The autoregulation considered in this paper relies mainly on a negative feedback: the proteins are repressors of their own gene expression.  The efficiency of a production process without feedback control is compared to a production process with an autoregulation of the gene expression assuming that both of them produce the same  average number of proteins.  The main characteristic used for the comparison is the standard deviation of the number of proteins at equilibrium.  With a  Markovian representation  and a simple model of repression, we prove that, under a scaling regime, the repression mechanism follows a Hill repression scheme with an hyperbolic control.  An explicit asymptotic expression of the variance of the number of proteins under this regulation mechanism is obtained. Simulations are used to study other aspects of autoregulation such as the rate of convergence to equilibrium of the production process and the case where the control of the production process of proteins is achieved via the inhibition of mRNAs. 
\end{abstract}

\maketitle 

\bigskip

\hrule

\vspace{-3mm}

\tableofcontents

\vspace{-1cm}

\hrule

\bigskip

\section{Introduction}
\subsection{Biological Context}
The \emph{gene expression} is the process by which  genetic information is  used to produce  functional products of gene expression: proteins and non-coding RNAs. This paper concerns itself with the production of proteins. The information flow from DNA genes to proteins is a fundamental process. It is composed of three main steps: \emph{Gene Activation}, \emph{transcription} and \emph{translation}. 

\begin{enumerate}
\item  The initiation of transcription is strongly regulated.  Schematically the gene   is said to be in ``inactive state'' if a repressor is bound on the gene's promoter preventing the RNA polymerase from binding and is in ``active state'' otherwise. 
\item When the gene is in active state, the RNA polymerase binds and initiates transcription that leads to the creation of a mRNA, a copy of a specific DNA sequence.
\item The translation of the messenger into a protein is achieved by a large complex molecule: the \emph{ribosome}. A ribosome binds to an active mRNA, initiates the translation and proceeds to protein elongation. Once the elongation terminates, the protein is released in the medium and the ribosome is anew available for any another translation.
\end{enumerate}

The production of proteins is the most important cellular activity, both for the functional role and the high associated cost in terms of resources. In a \emph{E. Coli} bacterium for example there are about $3.6\times 10^6$ proteins of approximately $2000$ different types with a large variability in concentration, depending on their types: from a few dozen up to $10^5$.  The gene expression is additionally a highly stochastic process and results from the realization of a very large number of elementary stochastic processes of different nature. The three main steps are the results of a large number of encounters of macromolecules following random motions, due in particular to thermal excitation, in the viscous fluid of the cytoplasm. One of the key problems is to understand the basic mechanisms which allow a cell to produce a large number of proteins with very different concentrations and in a random context. This can be seen as a problem of minimization of the variance of the number of proteins of each type. 

To study this problem, one can take a simple stochastic model, with a limited set $S$ of parameters preferably, describing the three steps of the production of a given type of protein. Once a closed form expression of the  variance of  the number of proteins is obtained, it is natural to find the parameters of the set $S$ which minimizes the variance with the constraint that the mean number of proteins is fixed. See the survey Paulsson~\cite{Paulsson}. 

A more effective way to regulate the number of proteins can be of using a direct feedback control, an {\em autoregulation} mechanism, so that the production of proteins is either sped up or slowed down depending on the current number of proteins. It should be noted that the feedback control loop can involve other intermediate proteins to achieve this goal, like the classical lac operon,  but it is not considered here. See Yildirim and MacKey~\cite{Yildirim} for example. 

The protein can regulate the gene activation simply, for example by being a repressor and tend to bind on his own gene's promoter. This is the {\em autogenous regulation} scheme. See Goldberger~\cite{Goldberger} and  Maloy and Stewart~\cite{Maloy}. See also Thattai and van Oudenaarden~\cite{Thattai}. Other autoregulation mechanisms are possible in cells, such as an autoregulation on the mRNAs where a protein inhibits its own translation initiation by binding to the translation initiation region of  its own mRNAs. It occurs for example in the production of ribosomal proteins, see Kaczanowska and Ryd\'en-Aulin~\cite{Kaczanowska}.   The idea being that a feedback mechanism may reduce significantly the number of large excursions from the mean.  In this paper, the mathematical analysis will  mainly focus  on a negative autogenous feedback, when the rate of inactivation of the  gene expression grows with the number of proteins. 

\subsection{Literature}
The classical results concerning the mathematical analysis of the variance of the number of proteins has been investigated in Berg~\cite{Berg1978} and Rigney~\cite{Rigney1979,Rigney1977} and reviewed more recently by Paulsson~\cite{Paulsson}, see also Raj and van Oudenaarden~\cite{Raj} for the biological aspects. These references use the three stage model, the state of the system is given by three variables: the state of the promoter, the number of mRNAs and the number of proteins. Mathematically, the techniques used rely on the Fokker-Planck equations of the associated three dimensional Markov process and the observation that at equilibrium, a recurrence on the moments of the number of proteins holds. Fromion et al.~\cite{Fromion} investigates  a more general model (elongation times are not necessarily exponentially distributed in particular)  and an alternative technique to a Markovian approach is introduced.

Concerning the evaluation of autoregulation, most of mathematical models use a continuous state, the rate of production of proteins depends linearly on the number of mRNAs and the rate of production of mRNAs is a function $k(p)$ exhibiting a non-linear dependence on the current number $p$ of proteins. In Rosenfeld et al.~\cite{Rosenfeld} and Becskei and Serrano~\cite{Becskei}, based on experiments the constant $k(p)$ is taken a {\em Hill repression function}, i.e. $k(p)=a/(b+p^n)$ for some constants $a$ and $b$ and $n\geq 1$ is the Hill coefficient. See also Thattai and van Oudenaarden~\cite{Thattai}. Related models in a similar framework with further results are presented in Bokes et al.~\cite{Bokes} and Yvinec et al.~\cite{Yvinec}. For most of these models the state of the promoter, active or inactive, which is a source of variability is not taken into account, it is in some way encapsulated in the constant $k(p)$ whose representation is rarely discussed.  In Hornos et al.~\cite{Hornos} the state of the gene expression, on or off, is taken into account but not the number of mRNAs and therefore the fluctuations generated by transcription.  The parameter of activation $k(p)$ is of course crucial in our case since autogenous regulation rely on the state of the promoter which can be inactivated by proteins. Our model includes it.  See also Fournier et al.~\cite{Fournier} for some simulations of these stochastic models of autoregulation as well as some experiments.

\subsection{Results of the Paper}
The main goal of this paper is to estimate the possible benefit of the autogenous regulation  to control the fluctuations of the number of copies of a given protein.  The efficiency of a production process without feedback control is compared to a production process with an autoregulation of the gene expression, assuming that both of them produce the same average number  proteins.  The main characteristic used for the comparison is the standard deviation of the number of proteins at equilibrium. For this purpose, two approaches are used. 

\medskip
\noindent
{\sc Mathematical Analysis.} One first studies the distribution of the number of proteins via a stochastic model. When there is no regulation, the corresponding classical mathematical model has been investigated in detail for some time now. In particular, the standard deviation of the number of proteins at equilibrium has a closed form expression in terms of the basic parameters of the production process. See for example the survey Paulsson~\cite{Paulsson}, and also Fromion et al.~\cite{Fromion}.

To represent the negative feedback of the autogenous regulation, a simple model is used: each protein can be bound, at some rate and for some random duration of time, on its own gene expression. In this situation the gene expression is inactive and the transcription is not possible during that time. This amounts to say that the gene expression is deactivated at a rate proportional to the number of proteins. The activation rate is constant. 

 As  will be seen, the mathematical model of the autogenous regulation is more complicated, in particular there is no recurrence relationship between the moments of the number of proteins at equilibrium as in the classical model of protein production process.  For this reason, a limiting procedure is used, it amounts to assume that the dynamics of the activation of the gene expression and of the evolution of mRNAs occur on a much faster time scale than the dynamics of the proteins. The values of the key parameters are presented in Section~\ref{biopar}.  The scaling parameter is the multiplicative factor describing the difference of speed  of these two time scales. The main convergence result is  Theorem~\ref{TheoFeed}.  The assumption of a fast time scale for gene expression activation and mRNAs  is quite common in the literature, see Bokes et al.~\cite{Bokes} and Yvinec et al.~\cite{Yvinec}. The techniques used in these references rely on singular perturbation methods to deal with the two time scales. In our setting,  a probabilistic approach is used, as  will be seen, it gives precise results on the asymptotic stochastic evolution of the number of proteins.  

Under this limiting regime it is shown that,  asymptotically,  the protein production process can be described as a birth and death process. See Keilson~\cite{Keilson} for example. In state $x\in\N$, the birth rate is given $a/(b+x)$ for some constants $a$ and $b$. This is a contribution of the paper that, with a simple model of the autoregulation, one can show that the repression mechanism follows indeed a Hill repression scheme with an hyperbolic control, i.e. with Hill coefficient $1$.  The death rate is not changed by the limiting procedure, it is  proportional to $x$.  Consequently, one can get an asymptotic closed form expression of the standard deviation of the number of proteins by using the explicit  representation of the equilibrium of this birth and death process. See Corollary~\ref{corol1}. It is shown that, in this limiting regime, the standard deviation is reduced by 30\%. The corresponding results are presented in Section~\ref{scalingsec} and Section~\ref{SecInv} and in Appendix~\ref{Appendix}. The mathematical results are obtained via convergence theorems for sequence of Markov process, the proof of a stochastic averaging principle and a saddle point approximation result. 

\medskip
\noindent
{\sc Simulations.}
We also analyze, via simulations,  autogenous regulation but also other aspects related to the regulation of protein production. This is presented in Section~\ref{SecDisc}. Simulations are used mainly because
of the complexity of the mathematical models  of some aspects of the autogenous regulation. By using plausible biological parameters, one gets an improvement of 15\% for the standard deviation of the number of proteins can be expected.  This is significantly less than the performances of the limiting mathematical model studied in Section~\ref{scalingsec}. The main reason seems to be that that the scaling parameter is not, in some cases, sufficiently large to have a reasonable accuracy with the limit given by the convergence result of Theorem~\ref{TheoFeed}.

Via simulations, one also investigates the case when the regulation is not on the gene expression but on the corresponding mRNAs: a protein can block an mRNA for some time. In this situation, it could be expected that the production process is modulated more smoothly by playing on the inactivation of a fraction of the mRNAs and not on the rough on-off control of the gene expression. It is shown that the improvement is real but not that big (less than 10\%). It is nevertheless remarkable that if the average life time of mRNAs is significantly increased, our experiments show  that the benefit of such regulation can be of the order of more than 30\% on the standard deviation of the number of proteins. 

Coming back to regulation on the gene expression. Our experiments show that, despite the  impact of autogenous regulation  on  fluctuations of the number of proteins can be limited,  it has nevertheless a very interesting property. Starting with a number of proteins significantly less (or greater) than the average number of proteins at equilibrium, the autogenous regulation returns to the ``correct'' number of proteins much faster than the classical production process without regulation. This is a clear advantage of this mechanisms to adapt quickly when  biological conditions change due to an external stress for example.  See Section~\ref{Vers}.  This phenomenon has been observed, via experiments, in Rosenfeld et al.~\cite{Rosenfeld}. See also Camas et al.~\cite{Camas}. Finally Section~\ref{FreqSec} investigates the comparison of production processes with and without a feedback on the gene expression  through the estimation of their respective power spectral density. 

\medskip
\noindent
{\bf Acknowledgments}\\
The authors are very grateful to an anonymous referee for the important work done on an earlier version of the paper. 
\section{Stochastic Models of Protein Production}\label{secMod}
We present the stochastic models used to investigate the protein production process. We will use the three step model describing the activation-deactivation of the gene, the transcription phase and the translation phase. 
 Like in most of the literature, it is assumed that the various events, like the encounter of two macromolecules,  occurring within the cell have a duration with an exponential distribution. We start with the classical model used in this domain since the late 70's by Berg~\cite{Berg1978} and Rigney~\cite{Rigney1979,Rigney1977}. See also Thattai and van Oudenaarden~\cite{Thattai} and Paulsson~\cite{Paulsson}. 

\subsection{The Classical Model of Protein Production}

\begin{enumerate}
\item The gene is activated at rate $\lambda_1^+$ and inactivated at rate $\lambda_1^-$.
\item If the gene is active, an mRNA is produced at rate $\lambda_2$.  An mRNA is degraded at rate $\mu_2$. 
\item Given $M$ mRNAs at some moment, a protein is produced at rate $\lambda_3 M$. Each protein is degraded at rate $\mu_3$. 
\end{enumerate}

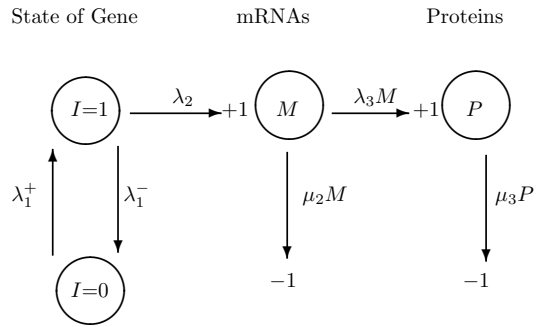
\begin{figure}
\setlength{\unitlength}{2044sp}

\begin{center}
\scalebox{0.8}{
\begin{picture}(7017,5292)(1828,-4594)
{\thicklines
\put(2395,-1328){\circle{1026}}
}%
{\thicklines\put(2465,-4058){\circle{1026}}
}%
\put(2165,-4158){$I{=}0$}
{\thicklines\put(5479,-1238){\circle{1026}}
}%
\put(4679,0){mRNAs}
\put(5279,-1408){$M$}
\put(1265,0){State of Gene}
\put(2165,-1408){$I{=}1$}
\put(3700,-1208){$\lambda_2 $}
\put(6450,-1208){$\lambda_3 M$}
\put(1265,-2700){$\lambda_1^+$}
\put(2965,-2700){$\lambda_1^- $}
\put(5679,-2700){$\mu_2 M$}
\put(8109,-4000){$-1$}
\put(8600,-2700){$\mu_3 P$}
{\thicklines\put(8309,-1238){\circle{1026}}
}%
\put(8159,-1408){$P$}
\put(7559,0){Proteins}
{\thicklines\put(3061,-1366){\vector( 1, 0){1390}}
}%
\put(4451,-1400){$+1$}
{\thicklines\put(6121,-1366){\vector( 1, 0){1175}}
}%
\put(7351,-1400){$+1$}
{\thicklines\put(1891,-3526){\vector( 0, 1){1620}}
}%
{\thicklines\put(2881,-1861){\vector( 0,-1){1620}}
}%
{\thicklines\put(5446,-1951){\vector( 0,-1){1620}}
}%
\put(5179,-4000){$-1$}
{\thicklines\put(8461,-1996){\vector( 0,-1){1620}}
}%
\end{picture}
}
\end{center}
\caption{Classical Three Stage Model for Protein Production.}\label{transfig1}
\end{figure}

The stochastic processes describing the protein production process are: $I(t)$ the state of the gene at time $t$ which is $0$ if it is inactive and $1$ otherwise. The number of mRNA at time $t$ is $M(t)$ and $P(t)$ denotes the number of proteins at that moment.  The process $(I(t),M(t),P(t))$ is Markovian with state space
\[
{\cal S}\stackrel{\text{def.}}{=}\{0,1\}\times\N^2,
\]
its transition rates are given by, if $(I(t),M(t),P(t)){=}(i,m,p)\in{\cal S}$,
\[
\begin{cases}
(0,m,p)\rightarrow (1,m,p) \quad\text{ at rate } \lambda_1^+,\qquad& (1,m,p)\rightarrow (0,m,p)\hfill \text{ at rate } \lambda_1^-,\\
(i,m,p)\rightarrow (i,m+1,p) \text{\phantom{at rate}} \lambda_2 i,\qquad &(i,m,p)\rightarrow (i,m-1,p) \hfill \mu_2m,\\
(i,m,p)\rightarrow (i,m,p+1) \text{\phantom{at rate}} \lambda_3m,\qquad& (i,m,p)\rightarrow (i,m,p-1) \hfill \mu_3p.
\end{cases}
\]
See Figure~\ref{transfig1}. 
This Markov process has  a unique invariant distribution. An explicit expression of the distribution of $P$ at equilibrium is not known but, due to the linear transition rates, the moments of $P$ can be calculated recursively. In the following $(I,M,P)$ will denote  random variables whose law is invariant for $(I(t),M(t),P(t))$. 
\begin{proposition} \label{Paulprop}
At equilibrium, the two first moments of $P$ can be expressed by
\begin{equation}\label{PaulExp}
\E(P) =  \frac{\lambda_{1}^+}{\lambda_{1}^++\lambda_{1}^-}\frac{\lambda_2}{\mu_2}\frac{\lambda_3}{\mu_3}
\end{equation}
\begin{multline}\label{PaulVar}
\var(P)=\E(P)\left(1+\frac{\lambda_{3}}{\mu_{2}+\mu_{3}}\right.\\\left.+\frac{\lambda_1^-\lambda_{2}\lambda_{3}\left(\lambda_{1}^{+}+\lambda_{1}^{-}+\mu_{2}+\mu_{3}\right)}{(\lambda_1^++\lambda_1^-)\left(\mu_{2}+\mu_{3}\right)\left(\lambda_{1}^{+}+\lambda_{1}^{-}+\mu_{2}\right)\left(\lambda_{1}^{+}+\lambda_{1}^{-}+\mu_{3}\right)}\right).
\end{multline}
\end{proposition}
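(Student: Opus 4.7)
The plan is to exploit the linearity of the transition rates: every rate is an affine function of $(i,m,p)$, so the infinitesimal generator $\cal{L}$ of the Markov process maps polynomials of degree $d$ in $(i,m,p)$ to polynomials of degree at most $d$. At equilibrium, $\E(\cal{L}f(I,M,P))=0$ for every polynomial $f$ with sufficient moment control, and applying this identity to all monomials of degrees one and two produces a finite linear system in the first and second moments of $(I,M,P)$. A key simplification is that $I\in\{0,1\}$, so $I^2=I$; this ensures that no third-order moment enters once we push the degree up to two, and the system closes.

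For the first moments, I would first observe that the marginal $I(t)$ is itself a two-state Markov chain on $\{0,1\}$ with rates $\lambda_1^+$ and $\lambda_1^-$, hence $\E(I)=\lambda_1^+/(\lambda_1^++\lambda_1^-)$. Applying $\cal{L}$ to $f(i,m,p)=m$ gives $\lambda_2\E(I)=\mu_2\E(M)$, and to $f(i,m,p)=p$ gives $\lambda_3\E(M)=\mu_3\E(P)$. Composing these three relations yields~(\ref{PaulExp}).

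For the variance, the relevant unknowns are the five second-order moments $\E(IM)$, $\E(M^2)$, $\E(IP)$, $\E(MP)$, $\E(P^2)$. Writing $\E(\cal{L}f)=0$ successively for $f=im,\,m^2,\,ip,\,mp,\,p^2$ produces, after using $I^2=I$ and the first-moment identities, a system that is triangular in precisely that order. For example, $f=im$ gives $(\lambda_1^++\lambda_1^-+\mu_2)\E(IM)=\lambda_1^+\E(M)+\lambda_2\E(I)$, which solves explicitly for $\E(IM)$; each subsequent moment is then determined by a single linear equation in one unknown. The formula~(\ref{PaulVar}) finally follows from $\var(P)=\E(P^2)-\E(P)^2$.

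The calculations themselves are mechanical; the genuine difficulty is cosmetic. Reorganizing the rational expression obtained for $\var(P)$ into the compact form of~(\ref{PaulVar}), where $\E(P)$ factors out and the three stages contribute three additive terms, is best handled by passing to centered variables $\widetilde{I}=I-\E(I)$, $\widetilde{M}=M-\E(M)$, $\widetilde{P}=P-\E(P)$ and rewriting the system in terms of covariances, as suggested in Paulsson~\cite{Paulsson}. This makes the three summands inside the parentheses of~(\ref{PaulVar})---protein shot noise, mRNA noise amplified through $\lambda_3/(\mu_2+\mu_3)$, and promoter noise filtered through the two time constants $\mu_2$ and $\mu_3$---appear separately and naturally, which is the only point at which brute symbolic simplification would otherwise be easy to get wrong.
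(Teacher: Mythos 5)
Your proof is correct and is essentially the standard argument: the paper itself omits the proof and refers to Paulsson, Shahrezaei--Swain, Swain et al.\ and Fromion et al., which all rest on exactly the moment recurrence you describe (linear rates close the system of first and second moments, and $I^2=I$ together with the triangular order $\E(IM),\E(M^2),\E(IP),\E(MP),\E(P^2)$ makes it solvable one equation at a time). Your sample equation $(\lambda_1^++\lambda_1^-+\mu_2)\E(IM)=\lambda_1^+\E(M)+\lambda_2\E(I)$ checks out against the generator, so nothing further is needed.
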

See Paulsson~\cite{Paulsson},  Shahrezaei and Swain~\cite{Swain2008}, Swain et al.~\cite{Swain2002} and Fromion et al.~\cite{Fromion} for example.
\subsection{A Stochastic Model of Protein Production with Autogenous Regulation}

The regulation is done via  proteins  which can inactivate the gene corresponding to the protein.  If there are $P$ proteins at some moment then the gene is activated at a rate proportional to $P$. Compared to the above model, only the first step changes.
\begin{enumerate}
\item The inactive gene is activated at rate $\lambda_1^+$ and inactivated at rate $\lambda_1^-P$ otherwise. 
\end{enumerate}
See Figure~\ref{transfig2}. 
For the sake of simplicity,  we use the same notations $\lambda_1^+$  and $\lambda_1^-$ as for the classical model of protein production instead of $\lambda_{F,1}^{+}$ and $\lambda_{F,1}^{-}$ for example. It should be noted that in our comparisons in Section~\ref{SecDisc}, these quantities are not necessarily the same for these two models. 

The corresponding Markov process is denoted as  $(I_F(t),M_F(t),P_F(t))$, its transitions have the same rate as $(I(t),M(t),P(t))$ except for those concerning the first coordinate.
\[
\begin{cases}
(0,m,p)\rightarrow (1,m,p) \quad\text{ at rate } \lambda_1^+,\qquad& (1,m,p)\rightarrow (0,m,p)\hfill \text{ at rate } \lambda_1^-p.
\end{cases}
\]
As before,  $(I_F,M_F,P_F)$ will denote  random variables whose law is the invariant distribution of the Markov process $(I_F(t),M_F(t),P_F(t))$. The following proposition is the analogue of Proposition~\ref{Paulprop} for the feedback model but with unknown quantities related to the the activity of the gene, $\E(I_F)$, and the correlation of the activity of the gene and the number of mRNAs, $\E\left(I_FM_F\right)$. 
\begin{proposition} \label{FeedProp}
At equilibrium, the  first moment of $P_F$ can be expressed by
\begin{equation}\label{FeedExp}
\E\left(P_F\right)=\E(I_F)\frac{\lambda_2}{\mu_2}\frac{\lambda_3}{\mu_3}.
\end{equation}
\end{proposition}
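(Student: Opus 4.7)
The plan is to derive the stated relation by writing down the generator equations at equilibrium for two simple linear test functions on the state space, namely $f(i,m,p)=m$ and $f(i,m,p)=p$, and combining them. Since the modification in the feedback model only affects the transition rates of the gene-state coordinate $i$, the rates governing the $m$ and $p$ coordinates are unchanged, and this is what makes the argument essentially identical to the classical case for these two moments.

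More concretely, let $\mathcal{A}_F$ denote the infinitesimal generator of $(I_F(t),M_F(t),P_F(t))$. Writing out $\mathcal{A}_F f$ from the transition table, one gets, for $f(i,m,p)=p$,
\begin{equation*}
\mathcal{A}_F f(i,m,p) = \lambda_3 m - \mu_3 p,
\end{equation*}
and for $f(i,m,p)=m$,
\begin{equation*}
\mathcal{A}_F f(i,m,p) = \lambda_2 i - \mu_2 m.
\end{equation*}
Since $(I_F,M_F,P_F)$ has the invariant law, $\E(\mathcal{A}_F f(I_F,M_F,P_F))=0$ for any bounded $f$, and more generally for any $f$ such that both sides are integrable. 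Applied to these two linear functions, this yields $\lambda_3\E(M_F)=\mu_3\E(P_F)$ and $\lambda_2\E(I_F)=\mu_2\E(M_F)$, from which the claimed identity~\eqref{FeedExp} follows immediately by elimination of $\E(M_F)$.

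The only real point that needs a word of care is the justification of using $\E(\mathcal{A}_F f)=0$ for the unbounded functions $f(i,m,p)=m$ and $f(i,m,p)=p$; this reduces to checking that $\E(M_F)$ and $\E(P_F)$ are finite under the invariant measure. This integrability is easy to obtain: the death rates $\mu_2 m$ and $\mu_3 p$ are linear and the birth rates of $M_F$ and $P_F$ are bounded above by $\lambda_2$ and by $\lambda_3 M_F$ respectively, so a standard Foster--Lyapunov argument (or a direct coupling with the classical model of Proposition~\ref{Paulprop}, whose moments are finite) shows that $(I_F,M_F,P_F)$ has finite first moments and the relations above are valid.

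I do not expect any serious obstacle in this proof: the moment equations in the feedback model decouple exactly at the level of the first moments of $M_F$ and $P_F$ because the nonlinearity introduced by the regulation appears only in the dynamics of $I_F$. In contrast, writing down an analogous identity for $\E(P_F^2)$ would involve $\E(I_F M_F)$ and higher mixed moments whose computation is no longer closed, which is exactly why the authors leave $\E(I_F)$ as an unknown in the statement and motivate the scaling analysis of the subsequent sections.
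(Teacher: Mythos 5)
Your proof is correct and is essentially the paper's own argument: the paper invokes ``equality of input and output'' for $(M(t))$ and $(P(t))$ at equilibrium, which is precisely the identity $\E(\mathcal{A}_F f)=0$ applied to the linear test functions $f=m$ and $f=p$, yielding the same two relations $\lambda_2\E(I_F)=\mu_2\E(M_F)$ and $\lambda_3\E(M_F)=\mu_3\E(P_F)$. Your additional remark on integrability (via coupling with the classical model) is a reasonable piece of care that the paper omits, but it does not change the route.
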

\begin{proof}
By equality of input and output for  $(M(t))$ and $(P(t))$ at equilibrium, one gets  the relations 
\[
\lambda_2\E\left(I_F\right)=\mu_2\E\left(M_F\right),\quad 
\lambda_3\E\left(M_F\right)=\mu_3\E\left(P_F\right),
\]
and therefore Relation~\eqref{FeedExp}. 
\end{proof}
It does not seem that an expression for $\E(I_F)$ can be obtained, the relation   $\lambda_1^-\E(I_FP_F)=\lambda_1^+(1-\E(I_F))$ of equality of flows for activation/deactivation process introduces  the correlation between $I_F$ and $P_F$. This is in fact the main obstacle to get more insight on the fluctuations of the number of proteins. The next section investigates a scaling where the activation/deactivation phase is much more rapid than the production process of proteins. 

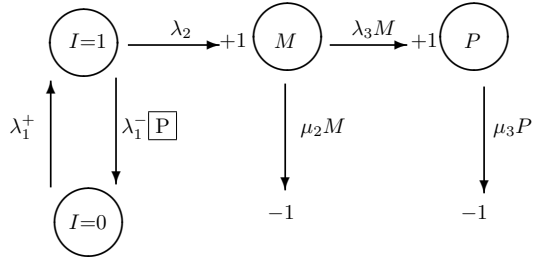
\begin{figure}
\setlength{\unitlength}{2044sp}

\begin{center}
\scalebox{0.8}{
\begin{picture}(7017,3892)(1828,-4594)
{\thicklines
\put(2395,-1328){\circle{1026}}
}%
{\thicklines\put(2465,-4058){\circle{1026}}
}%
\put(2165,-4158){$I{=}0$}
{\thicklines\put(5479,-1238){\circle{1026}}
}%
\put(5279,-1408){$M$}
\put(2165,-1408){$I{=}1$}
\put(3700,-1208){$\lambda_2 $}
\put(6450,-1208){$\lambda_3 M$}
\put(1265,-2700){$\lambda_1^+$}
\put(2965,-2700){$\lambda_1^-\fbox{P}$}
\put(5679,-2700){$\mu_2 M$}
\put(8109,-4000){$-1$}
\put(8600,-2700){$\mu_3 P$}
{\thicklines\put(8309,-1238){\circle{1026}}
}%
\put(8159,-1408){$P$}
{\thicklines\put(3061,-1366){\vector( 1, 0){1390}}
}%
\put(4451,-1400){$+1$}
{\thicklines\put(6121,-1366){\vector( 1, 0){1175}}
}%
\put(7351,-1400){$+1$}
{\thicklines\put(1891,-3526){\vector( 0, 1){1620}}
}%
{\thicklines\put(2881,-1861){\vector( 0,-1){1620}}
}%
{\thicklines\put(5446,-1951){\vector( 0,-1){1620}}
}%
\put(5179,-4000){$-1$}
{\thicklines\put(8461,-1996){\vector( 0,-1){1620}}
}%
\end{picture}
}
\end{center}
\caption{Three Stage Model for Protein Production with Autogenous Regulation}\label{transfig2}
\end{figure}
\section{A Scaling Analysis}\label{scalingsec}
It has been seen in the previous section that, for the feedback mechanism,  an explicit representation of the variance of the number of proteins at equilibrium seems to be difficult to derive. In this section we use the fact that the time scale of the first two steps, activation/deactivation of the gene and production of mRNAs is more rapid than the time scale of protein production. This is illustrated by the fact that the lifetime of an mRNA is of the order of   2mn. whereas the  doubling time  of a bacteria is  around $40$mn giving a lifetime of a protein of the order of one hour.  See Taniguchi et al.~\cite{Taniguchi}, Li and Elf~\cite{LiElf} and Hammar et al.~\cite{Hammar}.  As  will be seen, this assumption simplifies the analysis of the feedback mechanism. We will be able to get an asymptotic explicit expression for the distribution of the number of proteins at equilibrium. 

A (large) scaling parameter $N$ is used to stress the difference of time scale.  When there is a feedback control, an upper index $N$ is added to the variables so that the corresponding Markov process is denoted as  $(X^N_F(t))=(I^N_F(t),M^N_F(t),P^N_F(t))$ on the state space ${\cal S}=\{0,1\}\times\N^2$. The transition rates of the Markov process are given by 
\begin{equation}\label{rates}
\begin{cases}
(0,m,p)\rightarrow (1,m,p) \text{ at rate } \lambda_1^+N, & (1,m,p)\rightarrow (0,m,p)\hfill \text{ at rate } \lambda_1^-Np,\\
(i,m,p)\rightarrow (i,m+1,p) \text{\phantom{at rate}} i\lambda_2N, &(i,m,p)\rightarrow (i,m-1,p) \hfill \mu_2mN,\\
(i,m,p)\rightarrow (i,m,p+1) \text{\phantom{at rate}} \lambda_3 m, & (i,m,p)\rightarrow (i,m,p-1) \hfill \mu_3p.
\end{cases}
\end{equation}
The initial state is constant with $N$ given by $X^N_F(0)=(i_0,m_0,p_0)\in{\cal S}$.

The aim of this section is of proving that the non-Markovian process $(P^N_F(t))$ converges in distribution to a limiting  Markov process $(\overline{P}_F(t))$.  As  will be seen, an averaging principle, proved in the appendix, holds: locally the ``fast'' process $(I^N_F(t),M^N_F(t))$ reaches very quickly some equilibrium depending on the current value of the ``slow'' variable $P^N_F(t)$. It turns out that the equilibrium of this limiting process $(\overline{P}_F(t))$  can be analyzed in detail. The proof of the averaging principle relies on stochastic calculus applied to  Markov processes in the same spirit as in Papanicolau et al.~\cite{PSV} in a Brownian setting, see also Kurtz~\cite{Kurtz:05}.

\subsection*{Notations} Throughout the rest of this paper, we will use the following notations  $\rho_1=\lambda_1^+/\lambda_1^-$ and, for $i=1$, $2$,  $\rho_i={\lambda_i}/{\mu_i}$.

\subsection{Scaling of the Classical Model of Protein Production}
One first states a scaling result for the classical model of protein production. The result being much simpler to prove than the corresponding result, Theorem~\ref{TheoFeed}, for the feedback process, its proof is skipped. 
One denotes by  $(X^N(t))=(I^N(t),M^N(t),P^N(t))$ the corresponding Markov process, its transition rates are the same as for feedback in Relation~\eqref{rates} except for deactivation:
\[
(1,m,p)\rightarrow (0,m,p)\hfill \text{ at rate } \lambda_1^-N. 
\]
The following result shows that, in the limit, the evolution of the number of proteins converges to the time evolution of an $M/M/\infty$ queue. See Chapter~6 of Robert~\cite{Robert} for example. 
\begin{theorem}\label{TheoPaul}
If $X^N(0)=(i_0,m_0,p_0)\in{\cal S}$, the sequence of processes $(P^N(t))$ converges in distribution to a birth and death process $(\overline{P}(t))$ on $\N$ whose respective birth and death rates $(\beta_x)$ and $(\delta_x)$ are given by
\[
\beta_x=\frac{\lambda_3\rho_2\rho_1}{\rho_1+1} \text{  and  } \delta_x=\mu_3x.
\]
The equilibrium distribution of $(\overline{P}(t))$ is a Poisson distribution with parameter  $\rho_1\rho_2\rho_3/(1+\rho_1)$.
\end{theorem}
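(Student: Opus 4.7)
The plan is to carry out a stochastic averaging argument in the spirit of Kurtz~\cite{Kurtz:05}: on the accelerated time scale, the fast coordinates $(I^N(t),M^N(t))$ equilibrate instantaneously while the slow coordinate $(P^N(t))$ evolves as a birth--death process driven by the averaged mRNA count. The crucial simplification here, absent in Theorem~\ref{TheoFeed}, is that the dynamics of $(I^N,M^N)$ do not depend on $P^N$, so they form an autonomous fast process and the effective birth rate for proteins reduces to a plain constant.

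The first step is to identify the invariant distribution of the fast process. The chain $(I^N(t))$ is a two-state Markov process with transition rates $\lambda_1^+N$ and $\lambda_1^-N$, hence Bernoulli with parameter $\rho_1/(1+\rho_1)$ at equilibrium. Conditionally on $I=1$, the process $(M^N(t))$ is an accelerated $M/M/\infty$ queue with arrival rate $\lambda_2N$ and per-capita decay rate $\mu_2N$, whose equilibrium law is Poisson with mean $\rho_2$; conditionally on $I=0$, the mRNAs decay to $0$. The stationary mean of $M$ is therefore
\[
\widetilde{m} = \frac{\rho_1\rho_2}{1+\rho_1}.
\]

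The second step is the identification of the limit through a martingale problem. For any bounded $f\colon\N\to\R$, the process
\[
f(P^N(t))-f(P^N(0))-\int_0^t\Bigl[\lambda_3M^N(s)\bigl(f(P^N(s)+1)-f(P^N(s))\bigr)+\mu_3P^N(s)\bigl(f(P^N(s)-1)-f(P^N(s))\bigr)\Bigr]\diff s
\]
is a local martingale. Tightness of $(P^N(t))$ on compact intervals follows from Aldous--Rebolledo: the predictable bracket of the martingale part is controlled by $\int_0^t(\lambda_3M^N(s)+\mu_3P^N(s))\diff s$, whose expectation is bounded uniformly in $N$ by easy moment estimates on the $M/M/\infty$-type fast dynamics. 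One then has to replace $M^N(s)$ by $\widetilde{m}$ inside the integral in the limit. On a mesoscopic interval $[s,s+\eta]$ with $\eta\to 0$ and $N\eta\to\infty$, the slow coordinate $P^N$ is essentially frozen while $(I^N,M^N)$ relaxes to its stationary law at speed $N$, so a block-averaging argument yields, for any bounded continuous $g$,
\[
\int_0^t M^N(s)g(P^N(s))\diff s \;\longrightarrow\; \widetilde{m}\int_0^t g(\overline{P}(s))\diff s
\]
in probability along any subsequential limit. Passing to the limit in the martingale problem identifies $(\overline{P}(t))$ as the birth--death process with rates $\beta_x=\lambda_3\widetilde{m}$ and $\delta_x=\mu_3x$, and uniqueness of this martingale problem upgrades the convergence to the full sequence.

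The last step is trivial: this limiting birth--death process is an $M/M/\infty$ queue with arrival rate $\beta=\lambda_3\rho_1\rho_2/(1+\rho_1)$ and per-capita service rate $\mu_3$, hence its invariant law is Poisson with parameter $\beta/\mu_3=\rho_1\rho_2\rho_3/(1+\rho_1)$. The main technical obstacle is the averaging step; here it is considerably less delicate than in the feedback setting, because the fast process is autonomous and its ergodicity can be quantified directly by an explicit mixing bound, but uniform second-moment control of $M^N$ and $P^N$ over compact time intervals is still needed to justify the interchange of limits in the compensator.
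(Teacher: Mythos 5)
Your proposal is correct and follows essentially the route the paper itself takes: the authors skip the proof of this theorem precisely because it is a simpler instance of the stochastic averaging argument carried out in full for Theorem~\ref{TheoFeed} (martingale problem for $f(P^N(t))$, tightness via Aldous' criterion, replacement of $M^N$ in the compensator by its stationary mean under the fast dynamics, identification of the limiting birth-and-death process and of its Poisson equilibrium). One small inaccuracy worth fixing: at stationarity of the joint fast chain $(I^N,M^N)$ the conditional law of $M^N$ given $I^N=1$ is not Poisson with mean $\rho_2$ (nor is $M^N$ equal to $0$ given $I^N=0$, since the two coordinates are accelerated at the same speed and do not decouple); the correct and sufficient statement is that the stationary mean of $M^N$ equals $\rho_2\rho_1/(1+\rho_1)$ by the flow-balance identity $\lambda_2\E(I)=\mu_2\E(M)$, which is all that the averaged drift requires.
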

\begin{proof}
The intuition of this result can be described quickly as follows.  The processes $(I^N(t),M^N(t))$ live on a much faster time scale than $(P^N(t))$ and therefore  reach quickly the equilibrium. When $N$ gets large, the process $(M^N(t))$ is  an $M/M/\infty$ queue with arrival rate $\lambda_2\lambda_1^+/(\lambda_1^++\lambda_1^-)$ and service rate $\mu_2$.  See Chapter~6 of Robert~\cite{Robert} for example. Its equilibrium distribution is therefore  Poisson with parameter $\rho_2\rho_1/(1+\rho_1)$. The process $(P^N(t))$ can then be seen as an  $M/M/\infty$ queue with arrival rate $\lambda_3\rho_2\rho_1/(1+\rho_1)$ and service rate $\mu_3$, i.e. a birth and death process with the  transition rates of the theorem. Its equilibrium is Poisson with parameter  $\rho_1\rho_2\rho_3/(1+\rho_1)$.

The proof of a corresponding result in a more complicated setting, for the production process with feedback, is done below. For this reason the proof of this result is skipped. 
\end{proof}

\begin{figure}
\setlength{\unitlength}{2044sp}

\begin{center}
\scalebox{0.8}{
\begin{picture}(7017,3892)(1828,-4594)
{\thicklines
\put(2395,-1328){\circle{1026}}
}%
{\thicklines\put(2465,-4058){\circle{1026}}
}%
\put(2085,-4158){$I^N_F{=}0$}
{\thicklines\put(5479,-1238){\circle{1026}}
}%
\put(5279,-1408){$M^N$}
\put(2005,-1408){$I^N_F{=}1$}
\put(3700,-1008){$\lambda_2 \fbox{N}$}
\put(6450,-1208){$\lambda_3 M^N$}
\put(805,-2700){$\lambda_1^+\fbox{N}$}
\put(3065,-2700){$\lambda_1^-\fbox{N}P^N$}
\put(5679,-2700){$\mu_2 \fbox{N} M^N$}
\put(8109,-4000){$-1$}
\put(8600,-2700){$\mu_3 P^N$}
{\thicklines\put(8309,-1238){\circle{1026}}
}%
\put(8159,-1408){$P^N$}
{\thicklines\put(3061,-1366){\vector( 1, 0){1390}}
}%
\put(4451,-1400){$+1$}
{\thicklines\put(6121,-1366){\vector( 1, 0){1175}}
}%
\put(7351,-1400){$+1$}
{\thicklines\put(1891,-3526){\vector( 0, 1){1620}}
}%
{\thicklines\put(2881,-1861){\vector( 0,-1){1620}}
}%
{\thicklines\put(5446,-1951){\vector( 0,-1){1620}}
}%
\put(5179,-4000){$-1$}
{\thicklines\put(8461,-1996){\vector( 0,-1){1620}}
}%
\end{picture}}
\end{center}
\caption{Feedback Model with Scaling Parameter $N$}
\end{figure}
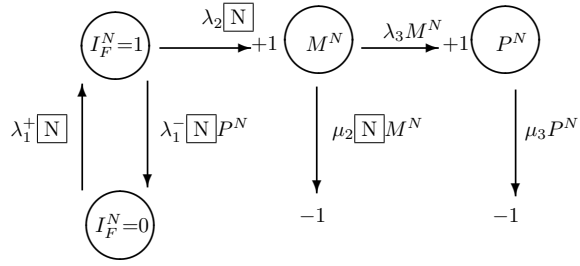

\subsection{Scaling of the Production Process with Feedback }
The following theorem is the main result of this section. As in the case of the classical model of protein production, it relies on the fact that, due to the scaling, the activation/deactivation of the gene and the production of mRNAs occur on a fast time scale so that an averaging principle holds. See below.  Some of the technical results used to establish the following theorem are presented in the Appendix. 

\begin{theorem}[Hill Repression Scheme]\label{TheoFeed}
If $X^N_F(0)=(i_0,m_0,p_0)\in{\cal S}$, the sequence of processes $(P^N_F(t))$ converges in distribution to a birth and death process $(\overline{P}_F(t))$ on $\N$ whose respective birth and death rates $(\beta_x)$ and $(\delta_x)$ are given by
\[
\beta_x=\frac{\lambda_3\rho_2\rho_1}{\rho_1+x} \text{  and  } \delta_x=\mu_3x,
\]
with $\rho_1=\lambda_1^+/\lambda_1^-$ and $\rho_2={\lambda_2}/{\mu_2}$.
\end{theorem}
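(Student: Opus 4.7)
The plan is to exploit the time-scale separation built into the rates~\eqref{rates}: the gene and mRNA dynamics evolve at rate of order $N$ while the protein dynamics remain at order $1$. Following the general philosophy of stochastic averaging (see Papanicolau et al.~\cite{PSV} or Kurtz~\cite{Kurtz:05}), the fast variables $(I^N_F(t),M^N_F(t))$ should, on the slow time scale, reach their invariant distribution given the current value of the slow variable $P^N_F(t)$; the limiting dynamics of $(P^N_F(t))$ is then governed by the generator averaged against this conditional equilibrium.

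Freezing the value $P^N_F=p$ and looking at the fast subsystem, the component $I^N_F$ is a two-state Markov chain with activation rate $\lambda_1^+N$ and inactivation rate $\lambda_1^-Np$, independent of $M^N_F$, with stationary distribution $\pi_p(I{=}1)=\rho_1/(\rho_1+p)$; the formula extends continuously to $p=0$, where the gene simply remains active. Given $I^N_F$ at equilibrium, the equality of creation and degradation flows for $M^N_F$ yields $\E_{\pi_p}(M)=\rho_2\rho_1/(\rho_1+p)$. Hence the ``averaged'' birth rate for the slow variable is exactly $\lambda_3\rho_2\rho_1/(\rho_1+p)$, while the death rate $\mu_3 p$ is unaffected by the scaling, matching the rates announced in the statement.

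To turn this heuristic into convergence in distribution, I would use the martingale decomposition
$$
P^N_F(t)=P^N_F(0)+\int_0^t\!\left[\lambda_3 M^N_F(s)-\mu_3 P^N_F(s)\right]\diff s+\mathcal{M}^N_F(t),
$$
with $\mathcal{M}^N_F$ a local martingale, and apply the averaging principle of the appendix to the functional $(i,m)\mapsto m$, showing that $\int_0^t M^N_F(s)\diff s$ converges in probability to $\int_0^t\rho_2\rho_1/(\rho_1+\overline{P}_F(s))\diff s$ along any subsequential limit $\overline{P}_F$. Tightness of $(P^N_F(\cdot))$ in the Skorokhod space is obtained by an Aldous-type criterion after establishing uniform moment bounds for $M^N_F$ and $P^N_F$. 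Combined with the averaging statement, applied at the level of the Poisson point processes driving the jumps of $P^N_F$ so as to identify the full limiting intensities and not merely the drift, this shows that any limit point solves the martingale problem for the birth-and-death process with the rates $(\beta_x)$ and $(\delta_x)$ of the theorem. Since the death rate is linear, this process is non-explosive, the martingale problem is well-posed, and uniqueness of the limit yields the stated convergence.

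The main obstacle is the averaging principle at the level of the Poisson jump representation of $(P^N_F(t))$, not merely at the level of the compensators; the content of the appendix is precisely to provide this, in the spirit of the stochastic-calculus approach of~\cite{PSV}. Two ancillary difficulties must be dealt with along the way: the unboundedness of the fast variable $M^N_F$, which forces a uniform-integrability estimate in order to apply averaging to the functional $m$ rather than to a bounded one; and a non-explosion estimate on $(P^N_F(t))$ itself, needed to rule out escape to infinity in finite time before passing to the limit.
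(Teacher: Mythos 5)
Your proposal is correct and follows essentially the same route as the paper: tightness of $(P^N_F(t))$ via Aldous' criterion with moment bounds from a coupling with the unregulated process, a stochastic averaging principle (proved in the appendix via convergence of occupation measures) identifying the conditional equilibrium of the fast pair $(I^N_F,M^N_F)$ given $P^N_F=p$ and the averaged mRNA count $\rho_1\rho_2/(\rho_1+p)$, and identification of any limit point as the unique solution of the martingale problem for the stated birth-and-death generator. The only cosmetic difference is that the paper works from the outset with the martingale decomposition for general finitely supported test functions $f$ rather than for the identity, which is exactly the device you correctly flag as necessary to capture the full jump intensities and not merely the drift.
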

\begin{proof}
If $f$ is a function on $\N$ with finite support then
\begin{multline*}
V^N_f(t)\stackrel{\text{def.}}{=} f(P^N_F(t))-f(p_0)\\
-\int_0^t \lambda_3 M^N_F(u) \Delta^+(f)(P^N_F(u))\,\diff u-\int_0^t \mu_3 P^N_F(u) \Delta^-(f)(P^N_F(u))\,\diff u,
\end{multline*}
is a local martingale. See Rogers and Williams~\cite{Rogers} for example. The operators $\Delta^+$ and $\Delta^-$ are defined as follows, for a real-valued function $f$ on $\N$,
\[
\Delta^+(f)(x)=f(x+1)-f(x) \text{ and } \Delta^-(f)(x)=f(x-1)-f(x), \quad x\in\N. 
\]
With a similar method as in the proof of Assertion~1) of Lemma~\ref{lem1} in the appendix and by using the criterion of the modulus of continuity, see Theorem 7.2 page~81 of Billingsley~\cite{Billingsley}, it is easy to show that the two processes
\[
\left(\int_0^t \lambda_3 M^N_F(u) \Delta^+(f)(P^N_F(u))\,\diff u\right) \text{ and }  \left(\int_0^t \mu_3 P^N_F(u) \Delta^-(f)(P^N_F(u))\,\diff u\right)
\]
are tight. Because of the tightness of $(P^N_F(t))$ of Proposition~\ref{lem2} of the appendix, one can take $(N_k)$ a subsequence such that the process
\[
\left(P^{N_k}_F(t),\int_0^t \lambda_3 M^{N_k}_F(u) \Delta^+(f)(P^{N_k}_F(u))\diff u, \int_0^t \mu_3 P^{N_k}_F(u) \Delta^-(f)(P^{N_k}_F(u))\diff u\right)
\]
converges in distribution.

Let $(\overline{P}_F(t))$ be a possible limit of $(P_F^{N_k}(t))$, then by continuity of the mapping
\[
(z(t)) \mapsto \left(\int_0^t  z(u) \Delta^-(f)(z(u))\,\diff u\right)
\]
on ${\cal D}([0,T])$ endowed with the Skorohod topology then, for the convergence in distribution
\begin{multline*}
\lim_{k\to+\infty}\left(P_F^{N_k}(t), \int_0^t  P_F^{N_k}(u) \Delta^-(f)(P_F^{N_k}(u))\,\diff u\right)
\\=\left(\overline{P}_F(t), \int_0^t  \overline{P}_F(u) \Delta^-(f)(\overline{P}_F(u))\,\diff u\right).
\end{multline*}
For $t\leq T$, by using the definition of $\Lambda^N$ and of ${\cal E}_T$ in Section~\ref{OccMeas} of the Appendix, one has the relation
\[
\int_0^t  M_F^{N_k}(u) \Delta^+(f)(P_F^{N_k}(u))\,\diff u=
\int_{{\cal E}_T}  m\Delta^+(f)(p)\mathbbm{1}_{[0,t]}(u)\,\Lambda^{N_k}(\diff z),
\]
hence, by Proposition~\ref{occmes} of Appendix, for the convergence in distribution
\begin{multline*}
\lim_{k\to+\infty} \int_{{\cal E}_T}  m\Delta^+(f)(p)\mathbbm{1}_{[0,t]}(u)\,\Lambda^{N_k}(\diff z),
\\=\int_0^t \sum_{p\in\N} \Delta^+(f)(p)\sum_{(i,m)\in\{0,1\}\times \N} m\ell_u(i,m,p)\diff u,
\\=\int_0^t \sum_{p\in\N} \Delta^+(f)(p)\frac{\lambda_1^+}{\lambda_1^++\lambda_1^-p}\frac{\lambda_2}{\mu_2}\nu_u(p)\,\diff u
\end{multline*}
by Relation~\eqref{eqEM} of Proposition~\ref{occmes} of the Appendix. By convergence of the sequence $(\Lambda^{N_k})$ this last expression can be expressed as
\begin{multline*}
\left(\int_0^t \sum_{p\in\N} \Delta^+(f)(p)\frac{\lambda_1^+}{\lambda_1^++\lambda_1^-p}\nu_u(p)\,\diff u\right)\\=
\lim_{k\to+\infty} 
\left(\int_0^t  \Delta^+(f)(P^{N_k}_F(u))\frac{\lambda_1^+}{\lambda_1^++\lambda_1^-P^{N_k}_F(u)}\,\diff u\right)
\\
\stackrel{\text{dist.}}{=} \left(\int_0^t  \Delta^+(f)(\overline{P}_F(u))\frac{\lambda_1^+}{\lambda_1^++\lambda_1^-\overline{P}_F(u)}\,\diff u\right)
\end{multline*}
for the convergence in distribution. 

For $0\leq s\leq t$, the characterization of a Markov process as the solution of a martingale problem gives the relation 
\begin{multline*}
\E\left(f(P^N_F(t))-f(P^N_F(s))-\int_s^t \lambda_3 M^N_F(u) \Delta^+(f)(P^N_F(u))\,\diff u\right.\\
\left.\left.-\int_s^t \mu_3 P^N_F(u) \Delta^-(f)(P^N_F(u))\,\diff u\right|{\cal F}_s\right)=0,
\end{multline*}
from which we deduce the identity
\begin{multline*}
\E\left(f(\overline{P}_F(t))-f(\overline{P}_F(s))-\int_s^t  \lambda_3\frac{\rho_1\rho_2}{\rho_1+\overline{P}_F(u)}\Delta^+(f)(\overline{P}_F(u)) \right.\\
\left.\left.-\int_s^t \mu_3 \overline{P}_F(u) \Delta^-(f)(\overline{P}_F(u))\,\diff u\right|{\cal F}_s\right)=0.
\end{multline*}
See Theorem II.2.42 of Jacod and Shiryaev~\cite{Jacod}. 
Consequently, a possible limit is the solution of the martingale problem associated to the birth and death process with birth rate $(\beta_x)$ and death rate $(\delta_x)$ and with initial state in $p_0$.  One gets therefore  the desired convergence in distribution of $(P^N_F(t))$. The theorem is proved. 
\end{proof}
There exist cases where the autoregulation is not achieved by the regulated protein but by a complex of this protein, e.g by a dimer (2 copies of the protein) or a tetramer (4 copies) to cite few examples.  In order to handle such cases, it is necessary to add to the gene expression model,  a preliminary step describing the reaction scheme of the complex formation based on the law of mass action.  In general, the dynamics involved in the reaction scheme are (very) rapid compared to the other processes of the gene expression and leads, by a singular perturbation like argument, to represent in case of deterministic model  the  rate of production of mRNAs as a non-linear function of protein concentration. Furthermore, when the reaction scheme possesses  suitable properties,  a Hill like repression function could also be  obtained. See Weiss~\cite{Wei:97} for details.  In the stochastic context, that leads to introduce a suitable scaling factor in the dynamics of the complex formation and to extend the previous derivation in the previous theorem to  Hill functions, $x\mapsto a/(b+x^n)$, with order $n$ greater than 1.

The next section analyzes, in this limiting regime, the fluctuations of the number of proteins at equilibrium. 
\section{Fluctuations of the Number of Proteins}\label{SecInv}
This section is devoted to the analysis of the equilibrium of the asymptotic process $(\overline{P}_F(t))$ of Theorem~\ref{TheoFeed} describing the evolution of the number of proteins with  feedback. We start with a classical result for birth and death processes. 
\begin{proposition}
The invariant distribution $\pi_F$ of the birth and death process $(\overline{P}_F(t))$ of Theorem~\ref{TheoFeed} is given by
\[
\pi_F(x)=\frac{1}{Z} \frac{(\rho_2\rho_3)^x}{x!} \prod_{i=0}^{x-1} \frac{\rho_1}{\rho_1+i}, \qquad x\in\N,
\]
where $\rho_1=\lambda_1^+/\lambda_1^-$, $\rho_i=\lambda_i/\mu_i$ for $i=1$, $2$ and $Z$ is the normalization constant. 
\end{proposition}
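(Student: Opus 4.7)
The plan is to invoke the standard local (detailed) balance argument for birth and death processes on $\N$. The process $(\overline{P}_F(t))$ is irreducible since $\beta_x=\lambda_3\rho_1\rho_2/(\rho_1+x)>0$ for all $x\geq 0$ and $\delta_x=\mu_3 x>0$ for $x\geq 1$. Consequently any invariant probability measure, if it exists, is unique and must satisfy the one-step balance relation
\[
\pi_F(x)\beta_x=\pi_F(x+1)\delta_{x+1},\qquad x\geq 0.
\]
This is the first step I would carry out: write down this recurrence as the defining equation for $\pi_F$.

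Next I would iterate the recurrence starting from $\pi_F(0)$, obtaining
\[
\pi_F(x)=\pi_F(0)\prod_{i=0}^{x-1}\frac{\beta_i}{\delta_{i+1}}
=\pi_F(0)\prod_{i=0}^{x-1}\frac{\lambda_3\rho_1\rho_2/(\rho_1+i)}{\mu_3(i+1)}.
\]
Separating the factor $1/(i+1)$, which telescopes into $1/x!$, and collecting $\lambda_3/\mu_3=\rho_3$ together with $\rho_2$ into the power $(\rho_2\rho_3)^x$ gives the announced closed form, with $\pi_F(0)=1/Z$ the normalization.

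The only non-formal point is to verify that the normalization is finite, and this is the step I would write out with some care. Rewriting the product as
\[
\prod_{i=0}^{x-1}\frac{\rho_1}{\rho_1+i}=\frac{\rho_1^x\,\Gamma(\rho_1)}{\Gamma(\rho_1+x)},
\]
and using the asymptotics $\Gamma(\rho_1+x)/x!\sim x^{\rho_1-1}$ as $x\to+\infty$, one finds
\[
\pi_F(x)\asymp \frac{(\rho_1\rho_2\rho_3)^x}{x!\,x^{\rho_1-1}},
\]
whose sum over $x\in\N$ converges since $x!$ dominates any exponential. Hence $Z<+\infty$ and $\pi_F$ is a well defined probability distribution on $\N$.

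I do not foresee any real obstacle here: the result is the standard product formula for the stationary distribution of a birth and death process, the only substantive verification being the summability of $Z$, which follows from the Stirling/Gamma asymptotics above. Uniqueness of the invariant law, which guarantees that the expression obtained really is the stationary distribution of $(\overline{P}_F(t))$, is a consequence of the irreducibility and positive recurrence of the chain once the finite invariant measure has been exhibited.
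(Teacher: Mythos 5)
Your argument is correct and is exactly the standard one; the paper in fact states this proposition without proof, dismissing it as ``a classical result for birth and death processes,'' and the cut/detailed-balance recursion $\pi_F(x)\beta_x=\pi_F(x+1)\delta_{x+1}$ followed by telescoping is precisely that classical argument. One cosmetic slip: since $\Gamma(\rho_1+x)\sim x!\,x^{\rho_1-1}$, your asymptotic for $\pi_F(x)$ should carry $(x!)^2 x^{\rho_1-1}$ in the denominator rather than $x!\,x^{\rho_1-1}$; this only strengthens the summability of $Z$, so the conclusion stands.
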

The expression of $\pi_f$ is explicit but with a normalization constant which is not simple. The constant $Z$ can be expressed in terms of  hypergeometric functions. See Abramowitz and Stegun~\cite{Abram} for example. 
Even if we can get a numerical evaluation of the average and of the variance of $\pi_F$, it is much more awkward to get some insight on the dependence of these quantities with respect to some of the parameters like $\rho_2$ or $\rho_3$ for example. In the following we give an asymptotic description of the ratio of the variance and the mean of the number of proteins at equilibrium when the value of the quantity $\rho_1\rho_2\rho_3$ is large. In a biological context the numerical value of this parameter is not always large but this limit results sheds some light on the qualitative behavior of the auto-regulation mechanism. See Corollary~\ref{corol1} for example. 
A Laplace method is in particular used to investigate the asymptotic behavior of the first two moments of $\pi_F$. 

Theorem~\ref{TheoPaul} shows that the distribution of the process $(P(t))$ at equilibrium is Poisson with parameter  $\E(P(t))=x_\rho{=}\rho_1\rho_2\rho_3/(1+\rho_1)$. In particular, 
one has the relation $\var(P(t))=\E(P(t))$. In the rest of this section, we will be interested in the corresponding quantity for the feedback process.

For $\eta>0$ and $\rho>0$, denote by $\nu_\rho$ the probability distribution on $\N$ defined by
\begin{equation}\label{nurho}
\nu_{\rho,\eta}(k)=\frac{1}{Z_\rho} \frac{\rho^k}{k!}\prod_{i=1}^k \frac{1}{\eta+i}
=\frac{1}{Z_\rho}\exp\left(\sum_{i=1}^k \log\left(\frac{x}{i(\eta+i)}\right)\right),
\end{equation}
where $Z_\rho$ is the normalization constant. It is easily seen that $\pi_F$ is $\nu_{\rho,\eta}$ with $\rho=\rho_1\rho_2\rho_3$ and $\eta=\rho_1-1$. 
\begin{proposition}\label{cvdistA}
If, for $\rho>0$ and $\eta>-1$,  $A_\rho$ is a random variable with distribution $\nu_{\rho,\eta}$ defined by Equation~\eqref{nurho}, then for the convergence in distribution
\[
\lim_{\rho\to +\infty} \frac{A_\rho-a_\rho}{\sqrt{a_\rho}}=
{\cal N}\left(0,1/\sqrt{2}\right),
\]
where $a_\rho=\left(\sqrt{\eta^2+4\rho}-\eta\right)/2$ and ${\cal N}\left(0,1/\sqrt{2}\right),$ is a centered Gaussian random variable with variance $1/2$. 
In particular, for the convergence in distribution,
\[
\lim_{\rho\to+\infty} \frac{A_\rho}{\sqrt{\rho}}=1.
\]
\end{proposition}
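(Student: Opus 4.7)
The plan is to apply a saddle-point (Laplace) analysis to the explicit product formula for $\nu_{\rho,\eta}$ and then pass from the resulting local limit to convergence in distribution by a Scheff\'e--Riemann sum argument.

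First, I would locate the mode. Writing the log-weight as $\phi(k)=k\log\rho-\log(k!)-\sum_{i=1}^{k}\log(\eta+i)$, the forward difference $\phi(k+1)-\phi(k)=\log\rho-\log((k+1)(k+1+\eta))$ vanishes when $(k+1)(k+1+\eta)=\rho$. Solving this quadratic pins the mode at $a_\rho-1$ (up to rounding), where $a_\rho$ is the positive root; this root satisfies the key identity $a_\rho(a_\rho+\eta)=\rho$ together with the asymptotics $a_\rho/\sqrt{\rho}\to 1$ and $(2a_\rho+\eta)/(a_\rho+\eta)\to 2$ as $\rho\to+\infty$.

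Next I would expand $\phi$ in a window of scale $\sqrt{a_\rho}$ around $a_\rho$. Using the algebraic identity $(a_\rho+j)(a_\rho+j+\eta)=\rho+j(2a_\rho+\eta)+j^{2}$, telescoping the first differences yields, for integer $s\ge 0$,
\[
\phi(a_\rho+s)-\phi(a_\rho)=-\sum_{j=1}^{s}\log\!\left(1+\frac{j(2a_\rho+j+\eta)}{\rho}\right),
\]
with an analogous formula for $s\le 0$. Taking $s=\lfloor\sqrt{a_\rho}\,y\rfloor$ with $y$ bounded, the argument of the logarithm is $O(\rho^{-1/4})$, so $\log(1+u)=u+O(u^{2})$ gives, after using the identities above, a leading quadratic-in-$j$ contribution of $-(2a_\rho+\eta)a_\rho y^{2}/(2\rho)\to -y^{2}$, while the cubic and cross-correction terms are $O(\rho^{-1/4})$. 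Consequently, uniformly on compact sets of $y$,
\[
\frac{\nu_{\rho,\eta}(\lfloor a_\rho+\sqrt{a_\rho}\,y\rfloor)}{\nu_{\rho,\eta}(\lfloor a_\rho\rfloor)}\;\xrightarrow[\rho\to+\infty]{}\; e^{-y^{2}}.
\]

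To upgrade this pointwise local limit to convergence in distribution of $Y_\rho=(A_\rho-a_\rho)/\sqrt{a_\rho}$, I would express $\P(Y_\rho\le y)$ as a Riemann-type sum of the above ratios with step $1/\sqrt{a_\rho}$ and invoke dominated convergence to obtain the limit $\pi^{-1/2}\int_{-\infty}^{y}e^{-u^{2}}\diff u$, which is the CDF of $\cal{N}(0,1/\sqrt{2})$. The normalization constant then takes care of itself because $Z_\rho/(\sqrt{a_\rho}\,\nu_{\rho,\eta}(\lfloor a_\rho\rfloor))\to\int_{-\infty}^{+\infty}e^{-y^{2}}\diff y=\sqrt{\pi}$. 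The second assertion $A_\rho/\sqrt{\rho}\to 1$ is then immediate from $a_\rho/\sqrt{\rho}\to 1$ and Slutsky.

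The main obstacle is the uniform tail bound required for dominated convergence. For $j\ge 1$ every summand $\log(1+j(2a_\rho+j+\eta)/\rho)$ is strictly positive, which already gives a monotone lower bound on $\phi(a_\rho)-\phi(a_\rho+s)$; a direct comparison with the quadratic term shows that this lower bound exceeds a fixed fraction of $y^{2}$ once $|y|$ is large enough, producing a sub-Gaussian envelope $e^{-cy^{2}}$ valid uniformly in $\rho$. Below the mode, an analogous argument using the symmetric identity for $(a_\rho-j)(a_\rho-j+\eta)$ provides the same envelope down to $k=0$. Once these envelopes are in place, the remainder of the argument is a routine Laplace-approximation computation.
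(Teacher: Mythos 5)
Your proposal is correct and follows essentially the same route as the paper: a Laplace/saddle-point analysis centered at the positive root $a_\rho$ of $a(a+\eta)=\rho$, with the telescoped log-ratios $\log\bigl(\rho/((a_\rho+j)(a_\rho+j+\eta))\bigr)$ producing the quadratic profile $e^{-y^2}$ and hence the limiting variance $1/2$. The only organizational difference is that you prove a local limit and upgrade it by a Scheff\'e/dominated-convergence argument with explicit sub-Gaussian envelopes, whereas the paper tests against compactly supported continuous functions and compares the sums $\Delta_\rho(\phi)$ to integrals, disposing of tightness only parenthetically via the estimates for $\Delta_\rho(1)$; your handling of the tails is, if anything, the more explicit of the two.
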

\begin{proof}
If $\phi$ is a bounded function on $\R$, denote
\[
\Delta_\rho(\phi)\stackrel{\text{def.}}{=}\frac{1}{\sqrt{a_\rho}}\sum_{k=0}^{+\infty} \phi\left(\frac{k-\lceil a_\rho\rceil}{\sqrt{a_\rho}}\right)\exp\left(\sum_{i=\lceil a_\rho\rceil}^k \log\left(\frac{\rho}{i(\eta+i)}\right)\right),
\]
the definition of $\nu_{\rho,\eta}$ gives that
\begin{equation}\label{eqr3}
\E\left(\phi\left(\frac{A_\rho-\lceil a_\rho\rceil }{\sqrt{a_\rho}}\right)\right) = \frac{\Delta_\rho(\phi)}{\Delta_\rho(1)}
\end{equation}
Fix $\phi$  some continuous function with compact support on $[-K_0,K_0]$ for some $K_0{>}0$. Since $a_\rho$ is the solution of the equation $a_\rho(\eta{+}a_\rho){=}\rho$,  a change of variable gives the relation
\[
\Delta_\rho(\phi){=}\frac{1}{\sqrt{a_\rho}}\sum_{k=-\lfloor K_0\sqrt{a_\rho}\rfloor}^{\lceil K_0\sqrt{a_\rho}\rceil} \hspace{-5mm}\phi\left(\frac{k}{\sqrt{a_\rho}}\right)\exp\left(\sum_{i=0}^k \log\left(\frac{a_\rho(\eta+a_\rho)}{(i+\lceil a_\rho\rceil)(\eta+\lceil a_\rho\rceil+i)}\right)\right).
\]
The uniform estimation
\begin{multline*}
\sum_{i=0}^k \log\left(\frac{a_\rho(\eta+a_\rho)}{(i+\lceil a_\rho\rceil)(\eta+\lceil a_\rho\rceil+i)}\right)\\=\int_{0}^{k} \log\left(\frac{a_\rho(\eta+a_\rho)}{(u+\lceil a_\rho\rceil)(\eta+\lceil a_\rho\rceil+u)}\right)\,\diff u+O\left(\frac{1}{\sqrt{a_\rho}}\right)
\end{multline*}
for all $k\in\Z$ with $|k|\leq K_0\sqrt{a_\rho}$ and the fact that $\phi$ has a compact support give that the quantity $\Delta_\rho(\phi)$ is equivalent to 
\begin{align*}
&\frac{1}{\sqrt{a_\rho}}\sum_{k=-\lfloor K_0\sqrt{a_\rho}\rfloor}^{\lceil K_0\sqrt{a_\rho}\rceil} \hspace{-5mm}\phi\left(\frac{k}{\sqrt{a_\rho}}\right)\exp\left(\int_{0}^{k} \log\left(\frac{a_\rho(\eta{+}a_\rho)}{(u{+}\lceil a_\rho\rceil)(\eta{+}\lceil a_\rho\rceil+u)}\right)\,\diff u\right)\\
  &=\frac{1}{\sqrt{a_\rho}}\sum_{k{=}{-}\lfloor K_0\sqrt{a_\rho}\rfloor}^{\lceil K_0\sqrt{a_\rho}\rceil} \hspace{-5mm}\phi\left(\frac{k}{\sqrt{a_\rho}}\right)\\
  &\hspace{1cm}\times \exp\left(\int_{0}^{k/\sqrt{a_\rho}} \hspace{-4mm}\sqrt{a_\rho}\log\left(\frac{a_\rho(\eta+a_\rho)}{(u\sqrt{a_\rho}{+}\lceil a_\rho\rceil)(\eta{+}\lceil a_\rho\rceil{+}u\sqrt{a_\rho})}\right)\diff u\right).
\end{align*}
Again, with the uniform estimation
\[
 \sqrt{a_\rho}\log\left(\frac{a_\rho(\eta+a_\rho)}{(u\sqrt{a_\rho}+\lceil a_\rho\rceil)(\eta+\lceil a_\rho\rceil+u\sqrt{a_\rho})}\right)=-2u+O\left(\frac{1}{\sqrt{a_\rho}}\right),
\]
for $u$ in some fixed  finite interval, one gets that
\[
\Delta_\rho(\phi){\sim}\frac{1}{\sqrt{a_\rho}}\sum_{k=-\lfloor K_0\sqrt{a_\rho}\rfloor}^{\lceil K_0\sqrt{a_\rho}\rceil}\hspace{-3mm} \phi\left(\frac{k}{\sqrt{a_\rho}}\right)\exp\left({-}2\int_{0}^{k/\sqrt{a_\rho}}u\,\diff u\right){\sim}\int_{-\infty}^{+\infty} \hspace{-5mm}\phi\left(v\right)e^{-v^2}\,\diff v.
\]
With similar estimations for $\Delta_\rho(1)$ (which imply in fact the tightness of the random variables $(A_\rho-\lfloor a_\rho\rfloor)/\sqrt{a_\rho}$) and Relation~\eqref{eqr3},  the proposition is proved. 
\end{proof}
\begin{corollary}[Asymptotic Number of Proteins with Regulation]\label{corol1}
If $\overline{P}_F$  is a random variable with distribution $\pi_F$ then, for the convergence in distribution
\begin{equation}\label{varFeed}
\lim_{\rho_2\rho_3 \to+\infty} \frac{\E(\overline{P}_F)}{\sqrt{\rho_1\rho_2\rho_3}} =1
\text{ and  }
\lim_{\rho_2\rho_3\to+\infty} \frac{\var(\overline{P}_F)}{\E(\overline{P}_F)}=\frac{1}{2}. 
\end{equation}
Furthermore, for the convergence in distribution, 
\[
\lim_{\rho_2\rho_3\to +\infty} \frac{\overline{P}_F-a_\rho}{\sqrt{a_\rho}}=
{\cal N}\left(0,1/\sqrt{2}\right),
\]
where $a_\rho=\left(\sqrt{(\rho_1-1)^2+4\rho_1\rho_2\rho_3}-\rho_1+1\right)/2$.
\end{corollary}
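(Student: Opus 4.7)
The key observation is the identification $\pi_F=\nu_{\rho,\eta}$ with $\rho{=}\rho_1\rho_2\rho_3$ and $\eta{=}\rho_1{-}1$, which one reads off directly from the expression for $\pi_F$ and the definition~\eqref{nurho} of $\nu_{\rho,\eta}$ (the parameter $\rho_1$ is kept fixed while $\rho_2\rho_3\to+\infty$, so $\rho\to+\infty$ and Proposition~\ref{cvdistA} applies). Setting $a_\rho{=}(\sqrt{\eta^2{+}4\rho}{-}\eta)/2$, this gives immediately $a_\rho{\sim}\sqrt{\rho_1\rho_2\rho_3}$, and the distributional convergence
\[
\frac{\overline{P}_F-a_\rho}{\sqrt{a_\rho}} \xrightarrow{\text{dist.}} {\cal N}(0,1/\sqrt{2})
\]
is Proposition~\ref{cvdistA} verbatim. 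So the third assertion is free.

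The two quantitative claims, however, require convergence of the first two moments, which does \emph{not} follow from weak convergence alone. The plan is to upgrade Proposition~\ref{cvdistA} to show
\[
\lim_{\rho\to+\infty}\E\!\left(\frac{A_\rho-a_\rho}{\sqrt{a_\rho}}\right)=0,\qquad
\lim_{\rho\to+\infty}\E\!\left(\frac{(A_\rho-a_\rho)^2}{a_\rho}\right)=\frac{1}{2}.
\]
The first relation yields $\E(\overline{P}_F)=a_\rho+o(\sqrt{a_\rho})\sim\sqrt{\rho_1\rho_2\rho_3}$, i.e.\ Relation~\eqref{varFeed}(a). Then, writing $\var(\overline{P}_F)=\E((\overline{P}_F{-}a_\rho)^2)-(\E(\overline{P}_F){-}a_\rho)^2 = a_\rho/2+o(a_\rho)$, and dividing by $\E(\overline{P}_F)\sim a_\rho$, one gets Relation~\eqref{varFeed}(b).

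The main technical obstacle is the uniform integrability of $((A_\rho{-}a_\rho)^2/a_\rho)$. The plan is to refine the local Laplace estimate from the proof of Proposition~\ref{cvdistA} (where the test function $\phi$ had compact support, so only $|k-a_\rho|\leq K_0\sqrt{a_\rho}$ mattered) into a two-sided Gaussian-type majoration valid for \emph{all} $k\in\N$. Concretely, writing
\[
\sum_{i=\lceil a_\rho\rceil}^k \log\!\left(\frac{\rho}{i(\eta+i)}\right)
\]
as a Riemann sum and using $a_\rho(\eta+a_\rho)=\rho$, the integrand vanishes at $u=a_\rho$, is strictly negative on both sides, and behaves like $-2(u-a_\rho)/a_\rho$ to first order; hence the sum is bounded above by $-c(k-a_\rho)^2/a_\rho$ on a range $|k-a_\rho|\leq\delta a_\rho$ for some absolute $c,\delta>0$, and beyond that range the log factors blow up (positively in absolute value) since $i(\eta{+}i)/\rho\to\infty$, yielding super-Gaussian decay. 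This gives a bound $\nu_{\rho,\eta}(k)\leq C\,a_\rho^{-1/2}\exp(-c(k{-}a_\rho)^2/a_\rho)$ uniformly in $\rho$ large enough, which immediately delivers uniform integrability of $(A_\rho{-}a_\rho)^p/a_\rho^{p/2}$ for every $p\geq 1$.

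The hard part is making the tail estimate uniform in $\rho$ on the full range $k\in\N$, in particular near $k=0$ where the $\log(\rho/(i(\eta{+}i)))$ terms are of order $\log\rho$; one has to track constants carefully in the comparison sum vs.\ integral. Once the uniform sub-Gaussian majoration is in place, combining it with Proposition~\ref{cvdistA} by standard uniform-integrability arguments (Vitali's theorem) finishes the proof of the two moment asymptotics, and the corollary follows.
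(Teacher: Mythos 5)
Your identification $\pi_F=\nu_{\rho,\eta}$ with $\rho=\rho_1\rho_2\rho_3$ and $\eta=\rho_1-1$ is exactly the route the paper takes: Corollary~\ref{corol1} is presented as an immediate consequence of Proposition~\ref{cvdistA}, with no separate argument given for the two moment limits in~\eqref{varFeed}. Where you differ is that you explicitly flag --- correctly --- that the distributional convergence of $(A_\rho-a_\rho)/\sqrt{a_\rho}$ does not by itself yield $\E(\overline{P}_F)\sim a_\rho$ or $\var(\overline{P}_F)\sim a_\rho/2$, and you supply the missing ingredient: a uniform sub-Gaussian majoration $\nu_{\rho,\eta}(k)\leq C a_\rho^{-1/2}\exp(-c(k-a_\rho)^2/a_\rho)$ giving uniform integrability of all normalized moments, followed by Vitali's theorem. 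The paper's proof of Proposition~\ref{cvdistA} only remarks parenthetically that the estimates on $\Delta_\rho(1)$ ``imply in fact the tightness'' of the normalized variables, which is weaker than what the corollary's first display actually requires. Your tail-bound plan is workable and can even be simplified: the ratio $\nu_{\rho,\eta}(k+1)/\nu_{\rho,\eta}(k)=\rho/((k+1)(\eta+k+1))$ is decreasing in $k$ and crosses $1$ at $k+1\approx a_\rho$, so $\nu_{\rho,\eta}$ is log-concave with mode at $\lfloor a_\rho\rfloor$ or thereabouts; bounding the log-ratio by $1/(1+\delta)$ off a window $|k-a_\rho|\leq\delta a_\rho$ and by $\exp(-c(k-a_\rho)/a_\rho)$ inside it gives the two-sided Gaussian domination directly, without a delicate sum-versus-integral comparison near $k=0$ (there the cumulative log-weight is of order $-a_\rho$, far below any Gaussian envelope with $c\leq 1$). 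In short: same approach as the paper, plus a legitimate and needed tightening of the moment-convergence step.
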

\noindent
The equivalent of  Relation~\eqref{varFeed} for the scaling of the  classical model of protein production is
\[
\E(P)=\frac{\rho_1}{1+\rho_1}\rho_2\rho_3
\text{ and  }
\frac{\var(P)}{\E(P)}=1.
\]
by Theorem~\ref{TheoPaul}. it shows that a feedback mechanism reduces the variance of the number of proteins in this limiting regime by a factor $2$ for the ratio of the second moment and  the first moment. 
\section{Discussion}\label{SecDisc}
In this section, other aspects of regulation of protein production are discussed via simulations in a plausible biological context whose parameters are going to be defined.  Simulation follows the models in Section 2.2 and simulates the variables $I_F$, $M_F$, and $P_F$, not their scaling limits.
\subsection{\label{biopar}Numerical Values of Biological Parameters}
For the model with feedback, there are six parameters to determine. By using the literature one can estimate the common orders of magnitude of these parameters in a biological context. We therefore propose a  set of parameters  corresponding  to an ``ordinary'' gene.   

\begin{enumerate}
\item Gene regulation. The parameter $\lambda_{1}^{-}$ gives the rate at which a given protein reaches its own promoter. It has been shown that this motion combines a three-dimensional diffusion in the cytoplasm and one-dimensional sliding along the DNA, see Halford~\cite{Halford}. 

Experiments on the lac repressor, using live-cell single-molecule imaging techniques, show that this time is of the order of 5 min, see Li and Elf~\cite{Li} and Hammar et al.~\cite{Hammar}. For this reason we will take $\lambda_{1}^{-}=3.3\times10^{-3}\,\mbox{\ensuremath{s^{-1}}}$.

The parameter $\lambda_{1}^{+}$ can be quite variable, depending on the affinity of the protein to the DNA sequence, we set $\lambda_{1}^{+}=1\,\mbox{\ensuremath{s^{-1}}}$.

\item mRNAs. The lifetime of an mRNA is $\mu_{2}^{-1}\simeq4\,\textup{min}$, see Taniguchi et al.~\cite{Taniguchi}. When the gene expression is always active (corresponding to the case where our variable $I$ remains equals to $1$), there is an average of $2$ messengers, that is to say $\lambda_2^{-1} = \mu_2^{-1} / m = 120\,s$ which gives $\lambda_{2}=8.3\times10^{-3}\,s^{-1}$.

\item Proteins. A doubling time for the cell of $t_{1/2}\simeq40$\,min gives a protein decay of around one hour. For this reason one takes $\mu_{3}=\log2/t_{1/2}=2.8\times 10^{-4}\,s^{-1}$ for the rate of protein decay. It is assumed that  a give type of  protein that is produced in $p=300$ copies when the gene expression is always active. From one messenger, a protein should be produced in a duration of time of the order of $\lambda_3^{-1}=m\times  \mu_3^{-1} / p$ which gives $\lambda_3= 4\times  10^{-2} \,s^{-1}$.

\end{enumerate}
These parameters may correspond to an ``ordinary bacterial'' gene: in a E. Coli genome of 4300 genes, there are around $3.6\times 10^6$ proteins and $1.4\times 10^3$ mRNAs  per gene, see Table~1 of Chapter~3 of Neidhardt~\cite{Neidhardt}, the number of messengers and proteins is of the order of magnitude of our numerical estimation of the parameters. 

\subsection{Impact of Autogenous Regulation on Gene Expression}
\begin{figure}
\begin{center}
\includegraphics[scale=0.4]{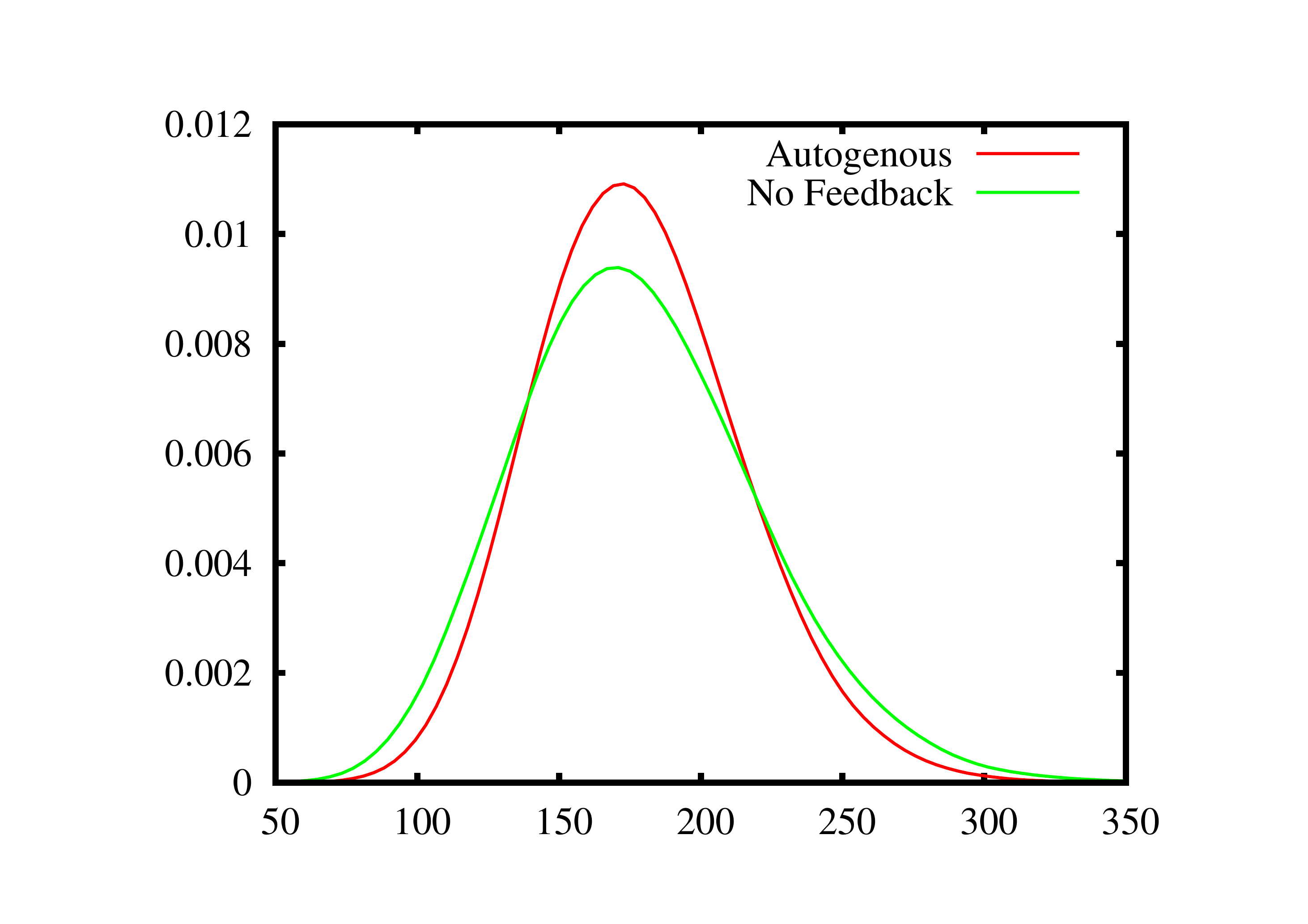}
\end{center}
\caption{\label{distrib_comp} Simulations: Protein distribution with and without autogenous regulation with a fixed mean number of proteins of $178$. }
\end{figure}

We have compared  two mechanisms: the classical model without regulation and the autogenous regulation process. The mean number of proteins is the same as well as the mean number of mRNAs produced $\E(M)=\E(M_F)$. Parameters $\lambda_1^+$ and $\lambda_1^-$ are adapted in the classical model to fulfill these conditions. The other parameters are as defined in the previous section. 

The comparison is shown in Figure~\ref{distrib_comp}. The mean number of proteins is $178$, as can be seen that the curve for the autogenous regulation is slightly more concentrated around the mean but not that much. The values of the corresponding standard deviations are not really different  $\sqrt{\var(P)}{=}42.2$ and $\sqrt{\var(P_F)}{=}35.8$. The impact of the autogenous regulation on the variability of the number of proteins is non-trivial but  not really spectacular for the set of parameters associated to a ``typical'' gene.  This is significantly less than the performances of the limiting mathematical model studied in Section~\ref{scalingsec}. The main reason seems to be that  the scaling parameter is not, in some cases, sufficiently large to have a reasonable accuracy with the limit given by the convergence result of Theorem~\ref{TheoFeed}.

\subsection{The Limiting  Scaling Regime as a Lower Bound} 
Roughly speaking, Theorem~\ref{TheoPaul} and Corollary~\ref{corol1} give that for $N$ and $\rho_2\rho_3$ large, then the ratio $\var(P_F^N)/\E(P_F^N)$ converges to $1/2$. In Figure~\ref{ScaleFig}, one considered a simulation with  fixed product $\rho_2\rho_3=71.43$ with $N$ varying. The interesting feature is that the ratio is decreasing with $N$, this suggests that the variance of the limit of the scaling procedure should provide a lower bound for the variance of the real model. We have not been able to show rigorously this phenomenon.  For  $N=250$, the value of the ratio $\var(P_F^N)/\E(P_F^N)=.7964$ which is quite far from its limiting value $1/2$ given by Corollary~\ref{corol1}. This  can be explained by the fact that the quantities $N$ and $\rho_2\rho_3$ are not very large. 
\begin{figure}
\begin{center}
\includegraphics[scale=0.3]{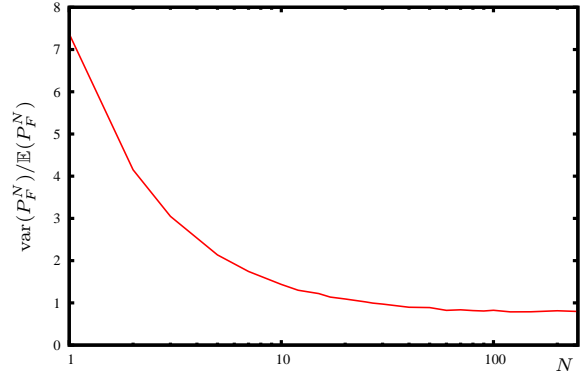}
\put(-25,25){$\scriptstyle{N}$}
\put(-230,70){\rotatebox{90}{$\scriptstyle{\var(P_F^N)/\E(P_F^N)}$}}
\end{center}
\caption{Simulations: Evolution of the ratio $\var(P_F^N)/\E(P_F^N)$ as a function of $N$. }\label{ScaleFig}
\end{figure}

\subsection{Regulation of the Production Process on mRNAs}
The regulation on the gene has the effect of an ON/OFF mechanism. When the gene is active, it is producing mRNAs at full speed and no mRNA is produced  when it is inactive. This suggests that the production of proteins follows roughly  the same pattern:  steady production rate at some instants and little is produced otherwise. This scheme can consequently increase the variability of the production process of proteins. A possible idea to reduce the variance due to the  activation/inactivation of the gene is to transfer the activation/inactivation process at the level the mRNAs. This possibility is investigated in this section.  Each mRNA can be inactivated by a protein at rate $\lambda_2^-$, in this state it cannot produce proteins.  An inactivated mRNAs becomes active at rate $\lambda_2^+$. In this way the production process can, hopefully, be  modulated more smoothly by playing on the inactivation of a fraction of the mRNAs. 
In this way at time $t$,  if the number of active [resp. inactive] mRNAs is $M(t)$ [resp. $M^*(t)$], the process  $(M(t),M^*(t),P(t))$ is  Markov with transition rates, for $(m,m^*,p)\in\N^3$, 
\[
\begin{cases}
(m,m^*,p)\rightarrow (m+1,m^*,p) \text{ at rate } \lambda_2, \\
(m,m^*,p)\rightarrow (m-1,m^*+1,p) \text{ at rate } \lambda_2^-mp, \\
 (m,m^*,p)\rightarrow (m+1,m^*-1,p)\hfill \text{ at rate } \lambda_2^+m^*,
\end{cases}
\]
the other transitions are as before, active of inactive mRNAs die at rate $\mu_2$ and proteins are produced at rate $\lambda_3m$ and die at rate $\mu_3$. 

To compare the two regulation processes, either on the gene or on mRNAs, simulations have been done with the following constraints: the average number of proteins is fixed around 1400. To have a fair comparison, we add the constraint that the number of mRNAs produced should be the same in all simulations. The numerical values have been estimated by using similar methods as in section~\ref{scalingsec} but for this setting.  Experiment~(3) considers the case of an average  lifetime of an mRNA of 40mn, if this is far from a ``normal'' biological setting, as it will be seen, this scenario has the advantage of stressing the importance of this parameter in this configuration. 

\subsection*{Numerical Values of Parameters}
\begin{enumerate}
\item Regulation on the gene. 

\medskip
\begin{tabular}{|l|l|l|l|l|l|}\hline
$\lambda_1^+$ & $\lambda_1^-$& $\lambda_2$& $\mu_2$& $\lambda_3$& $\mu_3$\\\hline
0.21''&5'&12''&4'&25''&1h.\\\hline
\end{tabular}
\item[] 
\item Regulation on mRNAs (I). \\
For this experiment, the expected lifetime of an mRNA is twice the corresponding value of case~(1). 

\medskip
\begin{tabular}{|l|l|l|l|l|l|}\hline
 $\lambda_2$&$\lambda_2^+$ & $\lambda_2^-$& $\mu_2$& $\lambda_3$& $\mu_3$\\\hline
23''& 2''& 45'&8'&25''&1h.\\\hline
\end{tabular}
\item[] 
\item Regulation on mRNAs (II). \\
For this second experiment on the regulation of mRNAs, the expected lifetime of an mRNA is $10$ times than in case~(1). 

\medskip
\begin{tabular}{|l|l|l|l|l|l|}\hline
 $\lambda_2$&$\lambda_2^+$ & $\lambda_2^-$& $\mu_2$& $\lambda_3$& $\mu_3$\\\hline
23.8''& 2''&45'&40'&25''&1h.\\\hline
\end{tabular}
\end{enumerate}

\medskip

\subsection*{Results of the Experiments}

Table~\ref{tab1} shows that the mean number of mRNAs produced per unit of time is essentially the same in all  experiments as well as the mean number of active mRNAs. It should be noted the impact of regulation on  mRNAs for the standard deviation of the number of proteins when the mean life time is $8$mn is not really significant ($10$\% gain) than the regulation on the gene. When the mean lifetime is $40$mn the improvement, $36$\%, of the standard deviation  becomes significant, showing that in this case the production process is ``smoothed'' by this mechanism. The three distributions of the number of proteins of these experiments are presented in  Figure~\ref{fig3}. 
\begin{table}[ht]
\scalebox{0.8}{
\begin{tabular}{|l|l|l|l|}
\hline
Regulation on& Gene & mRNAs/8mn&mRNAs/40mn\\\hline
Mean Nb of  mRNAs & 10.33&19.74 &99.04 \\\hline
Mean Nb of Active mRNAs & 10.33&9.77 &9.81 \\\hline
Mean Nb of Proteins & 1403.63& 1400.29 &1403.36 \\\hline
 Standard Deviation of Nb of Proteins& 92.66& 84.22& 59.04\\\hline
\end{tabular}}
\medskip

\caption{Comparison of Regulation Processes on Gene or on mRNAs with Different Lifetimes}\label{tab1}
\end{table}

\begin{figure}[ht]
\scalebox{1.0}{
\begin{picture}(500,200)(0,20)
\put(212,185){${\scriptstyle \mu_2^{-1}{=}40\text{mn}}$}
\put(215,195){${\scriptstyle \mu_2^{-1}{=}8\text{mn}}$}
\put(215,205){${\scriptstyle \mu_2^{-1}{=}4\text{mn}}$}
\includegraphics[scale=0.4]{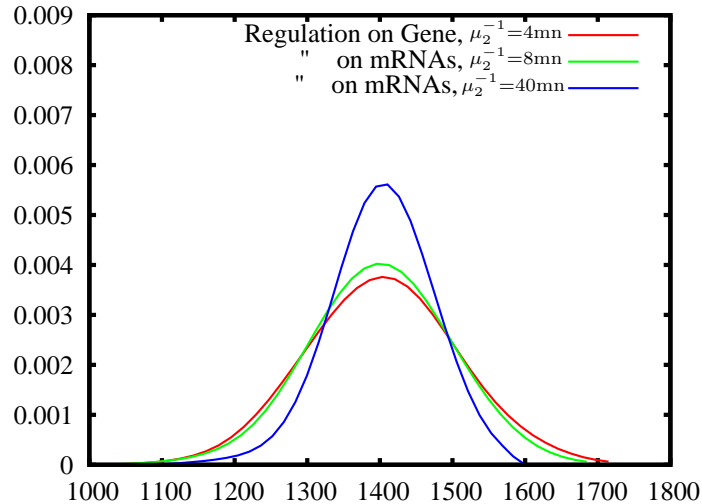} 
\end{picture}}
\caption{Simulations: Probability Distribution of the Number of Proteins with Regulation on Gene or on mRNAs,  $\mu_2^{-1}$ is the  average lifetime of an mRNA. The average number of proteins is $1400$. }\label{fig3}
\end{figure}

\subsection{Impact of Feedback on Frequency}\label{FreqSec}
In this section, we study the nature of the fluctuations of the number of proteins at equilibrium from the point of view of  signal processing or automatic control. The aim of a feedback is often of changing the nature of the signal, attenuating disturbances by reducing, for instance, high frequencies. In these cases, spectral analysis gives a characterization of the nature of changes.

By analogy, we  consider our model as a system that has to achieve a command (the  production of a given mean number of proteins) and where the resulting signal $P(t)$  or $(P_F(t)$ is altered by some noise. In this framework, one can study if the effect of the feedback has an impact on the signal, by rejection of some frequency ranges.

To do so,  consider the signals $(P(t))$ and $(P_F(t))$ of  two simulations with or without autogenous regulation. The analysis of these signals is done by estimating the power spectral density, that describes the spectral characteristics of stochastic process. We estimate the power spectral density for each signal, using classical estimator of smoothed periodogram. See George et al.~\cite{George} and Chapter 10 of Miller et al.~\cite{Miller} for example.

The result is shown in Figure~\ref{fig:Spectrum}. Both spectra seem to represent a low-pass filter with a cut off frequency in the order of magnitude of the dilution factor $\mu_{3}=2.8\times 10^{-4}\,s^{-1}$.  The two power spectral densities do not seem to exhibit significant differences. The feedback has therefore no noticeable effect in terms of reduction of  frequency disturbances.

\begin{figure}[!ht]
\scalebox{0.3}{ \includegraphics{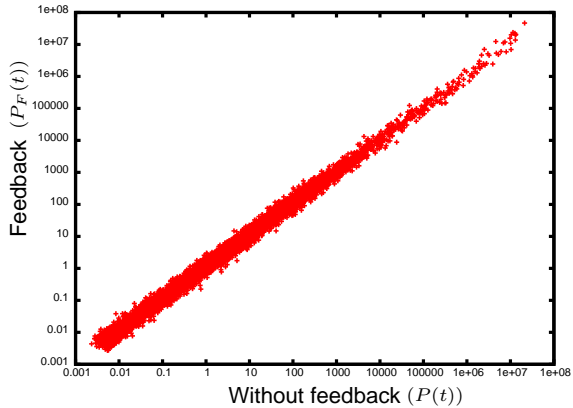}}
\put(-70,17){$\scriptstyle{(P(t))}$}
\put(-220,120){\rotatebox{90}{$\scriptstyle{(P_F(t))}$}}
\caption{Power spectral density estimation of signals with and without regulation}\label{fig:Spectrum}
\end{figure}

\subsection{\label{Vers}Versatility of the Protein Production  Process}
This section is devoted to the impact of autogenous regulation on another aspect of protein production.  Up to now, we have considered the production process of proteins at equilibrium, by assuming that the production rate of a given protein has to be fixed. It may happen nevertheless that, due to an external stress, such as antibiotics, DNA damage by UV, see Camas et al.~\cite{Camas}, or nutriment absorption, see Schleif~\cite{Schleif}, the cell has to change rapidly its production rate to quickly produce a large amount of proteins for example. The affinity of the transcription factor for the promoter of the gene can be adapted for that purpose.  Conversely, when the external stress disappears, the production of the protein has to be quickly reduced to minimize the consumption of resources.

We consider the situation when the two production processes, with and without autogenous regulation, give the same average output of proteins at equilibrium. Two cases are investigated: when the initial number of proteins is below the value equilibrium, see Figure~\ref{FigVersA}, or above this value, see Figure~\ref{FigVersB}. As it can be seen, the autogenous production process converges more rapidly to equilibrium in both cases. 
Our simulations show that when the initial value is 290, the autogenous production process is 40\% faster  than the process without feedback to reach the level 1300 (the equilibrium is at 1400 in this case). A similar result holds in the other case.

These interesting properties are related to the modulation of the gene activity. In the experiment of Figure~\ref{FigVersA}, for the autogenous process the rate of activity of the gene is of the order of  50\% at the beginning and it is only of the order of 0.1 later at equilibrium. Without regulation this rate is constant throughout the simulation. This explains the ``fast start'' of the autogenous process. An analogous explanation holds for the experiment of Figure~\ref{FigVersA}, in the autogenous process. The gene is rapidly switched off due to the large number of proteins, thereby decreasing rapidly the number of proteins. This is consistent with experiments described in Camas et al.~\cite{Camas} and especially Rosenfeld et al.~\cite{Rosenfeld} where the improvement has been estimated at 80\% in some cases.

\begin{figure}[!ht]
\scalebox{0.3}{\includegraphics{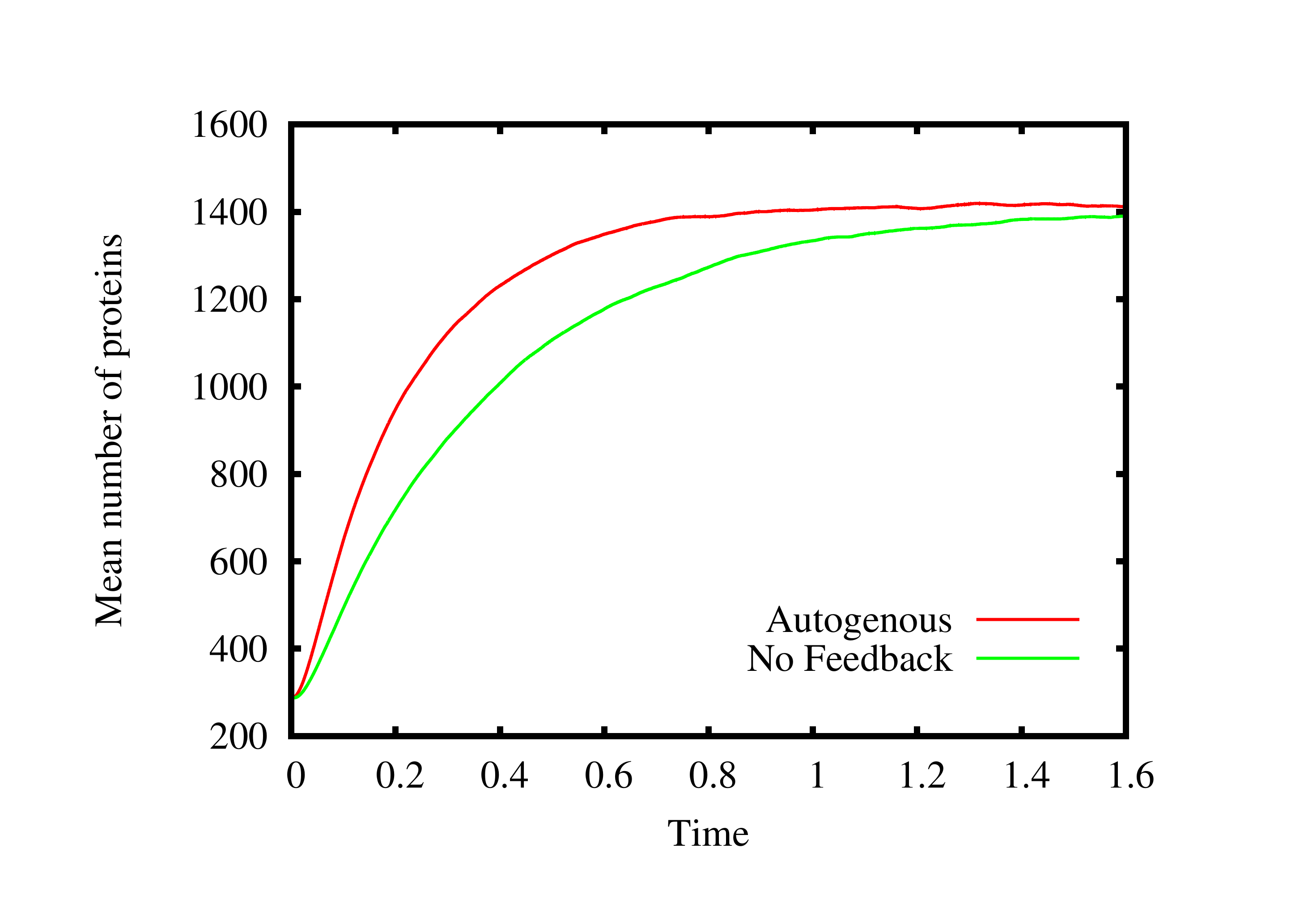}}
\caption{Simulations: Evolution of the Mean Number of Proteins: Initial Point at 290, equilibrium at 1400. Time scale: $\times 10^4$ sec.}\label{FigVersA}
\end{figure}

\begin{figure}[!ht]
\scalebox{0.3}{ \includegraphics{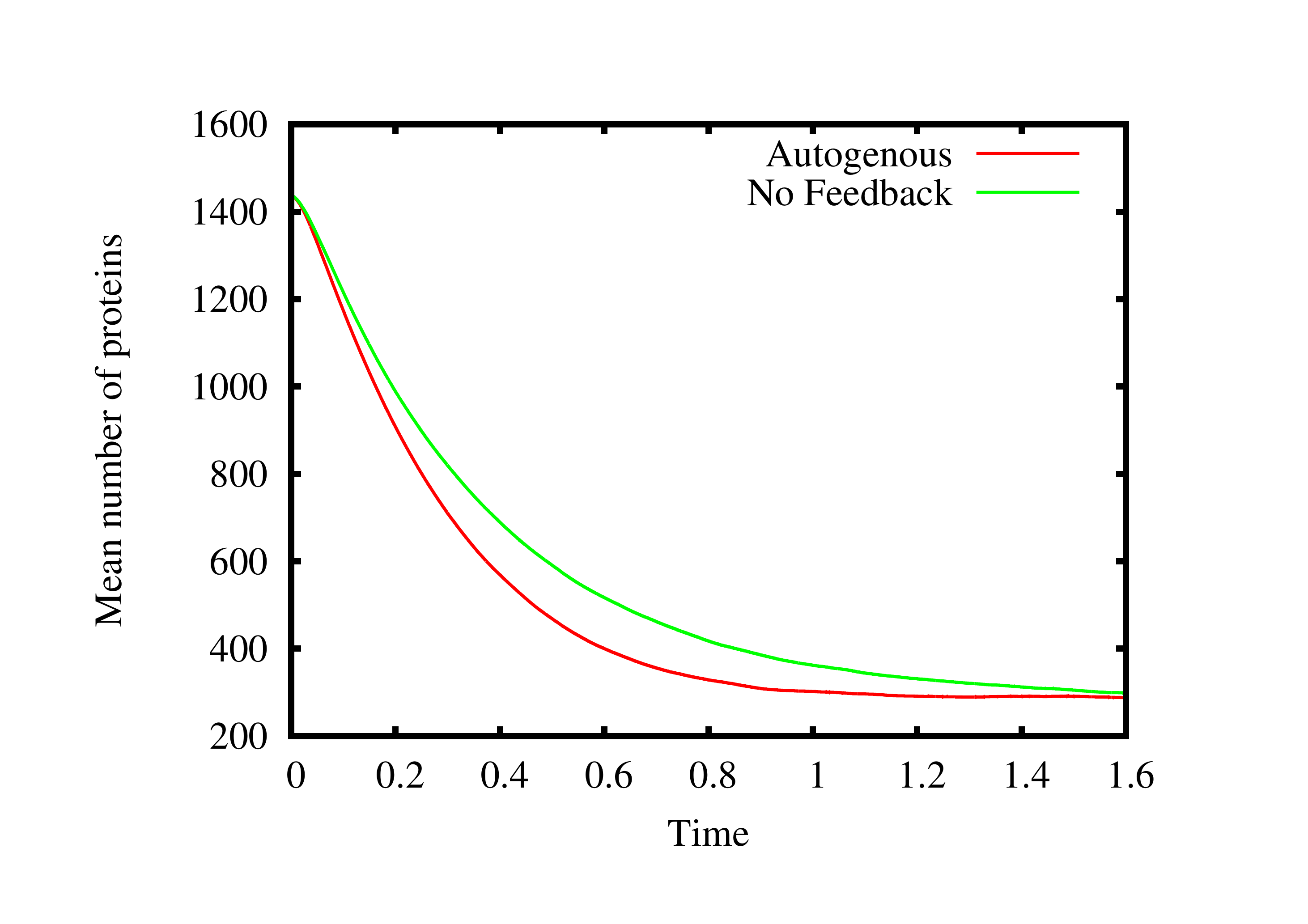}}
\caption{Simulations: Evolution of the Mean Number of Proteins: Initial Point at 1400, equilibrium at 290. Time scale: $\times 10^4$ sec.}\label{FigVersB}
\end{figure}

\providecommand{\bysame}{\leavevmode\hbox to3em{\hrulefill}\thinspace}
\providecommand{\MR}{\relax\ifhmode\unskip\space\fi MR }
\providecommand{\MRhref}[2]{%
  \href{http://www.ams.org/mathscinet-getitem?mr=#1}{#2}
}
\providecommand{\href}[2]{#2}

\appendix
\section{Convergence Results}\label{Appendix}
We first introduce some notations that will be used throughout this section. 
\subsection{Evolution Equations}
We will use  the Skorohod's topology for  convergence in distribution in the space ${\cal D}([0,T],\R_+)$ of  c\`adl\`ag processes.  See Chapter~3 of Billingsley~\cite{Billingsley} for example. To simplify the presentation, all our processes will be defined on the same probability space in the following way. 

Let  ${\cal N}_i^{+}$, ${\cal N}_i^{-}$,  $i{=}1$, $2$, $3$ be independent Poisson processes on $\R_+^2$ with rate $1$ defined on a probability space $(\Omega, {\cal F}, \P)$. 
If  $A\in{\cal B}(\R_+^2)$ is a Borelian subset of $\R_+^2$ and $(i,c)\in\{1,2,3\}\times\{+,-\}$, ${\cal N}_i^{c}(A)$ denotes the number of points of the process ${\cal N}_i^{c}$ in the subset $A$.  
For $t\geq 0$, one denotes by ${\cal F}_t$ the $\sigma$-field generated by the random variables
\[
{\cal N}_i^{c}(B\times[0,t]) \text{ for } B\in{\cal B}(\R_+) \text{ and } (i,c)\in\{1,2,3\}{\times}\{{+},{-}\}.
\]
It is easily seen that the process $(X^N_F(t))$ has the same distribution as the solution of the following stochastic differential equations (SDE)
\begin{align}
\diff I^N_F(t) &= \ind{I^N_F(t-)=0}{\cal N}_1^+([0,\lambda_1^+N]\times[\diff t])\\& \hspace{2cm}-\ind{I^N_F(t-)=1}{\cal N}_1^-([0,\lambda_1^- N P^N_F(t-)]\times[\diff t])\notag\\
\diff M^N_F(t) &= \ind{I^N_F(t-)=1}{\cal N}_2^+([0,\lambda_2 N]\times[\diff t])-{\cal N}_2^-([0,\mu_2N M^N_F(t-)]\times[\diff t])\label{o1}\\
\diff P^N_F(t) &= {\cal N}_3^+([0,\lambda_3 M^N_F(t-)]\times[\diff t])-{\cal N}_3^-([0,\mu_3 P^N_F(t-)]\times[\diff t])\label{o2}
\end{align}
with the same initial condition. For any $N\geq 1$, $(X^N_F(t))$ is a Markov process adapted to the filtration $({\cal F}_t)$. These SDE can be rewritten as,  for some function $f$ with finite support  on ${\cal S}$, 
\begin{multline}\label{SDE}
f(X^N_F(t))=f(X^N_F(0))+
\int_0^t \lambda_1^+N(1-I^N_F(u)) \Delta_1(f)(X^N_F(u))\,\diff u\\+\int_0^t \lambda_1^-NP^N_F(u) I^N_F(u) \Delta_1(f)(X^N_F(u))\,\diff u\\
+\int_0^t \lambda_2N I^N_F(u)\Delta_2^+(f)(X^N_F(u))\,\diff u+\int_0^t \mu_2 N M^N_F(u) \Delta_2^-(f)(X^N_F(u))\,\diff u\\
+\int_0^t \lambda_3 M^N_F(u) \Delta_3^+(f)(X^N_F(u))\,\diff u+\int_0^t \mu_3 P^N_F(u) \Delta_3^-(f)(X^N_F(u))\,\diff u+W^N_f(t)
\end{multline}
where, for $x=(i,m,p)\in{\cal S}$,  the operators $\Delta_{\cdot}^{+/-}$ are defined by
\[
\begin{cases}
\Delta_1(f)(x)=f(1{-}i,m,p){-}f(x)\\
\Delta_2^+(f)(x)=f(i,m+1,p){-}f(x),\quad \Delta_2^{-}(f)(x)=f(i,m{-}1,p){-}f(x)\\
\Delta_3^+(f)(x)=f(i,m,p+1){-}f(x),\quad \Delta_3^{-}(f)(x)=f(i,m,p{-}1){-}f(x),
\end{cases}
\]
and $(W^N_f(t))$ is a local martingale whose previsible increasing process is given by 
\begin{multline}\label{crocW}
\croc{W^N_f}(t){=}
\int_0^t \left[\lambda_1^+N(1{-}I^N_F(u)){+}\lambda_1^-NP^N_F(u) I^N_F(u)\right] \left[\Delta_1(f)(X^N_F(u))\right]^2\,\diff u\\
+\int_0^t \lambda_2N I^N_F(u)\left[\Delta_2^+(f)(X^N_F(u))\right]^2\,\diff u+\int_0^t \mu_2N M^N_F(u) \left[\Delta_2^-(f)(X^N_F(u))\right]^2\,\diff u\\
+\int_0^t \lambda_3 M^N_F(u) \left[\Delta_3^+(f)(X^N_F(u))\right]^2\,\diff u+\int_0^t \mu_3 P^N_F(u) \left[\Delta_3^-(f)(X^N_F(u))\right]^2\,\diff u.
\end{multline}
See Rogers and Williams~\cite{Rogers} for example. 

\begin{definition}\label{def1}
Let   $(\overline{M}^N(t),\overline{P}^N(t))$ be the Markov process with transition rates given by 
\begin{equation}\label{eqr1}
\begin{cases}
(m,p)\rightarrow (m+1,p) \text{ at rate } \lambda_2N, &(m,p)\rightarrow (m-1,p) \text{\phantom{are}}"\text{\phantom{are}}  \mu_2mN,\\
(m,p)\rightarrow (m,p+1) \text{\phantom{are}}"\text{\phantom{are}} \lambda_3 m, & (m,p)\rightarrow (m,p-1)\text{\phantom{are}}"\text{\phantom{are}} \mu_3p
\end{cases}
\end{equation}
and initial state $(\overline{M}^N(0),\overline{P}^N(0))=(m_0,p_0)$. 
\end{definition}
The process $(\overline{M}^N(t),\overline{P}^N(t))$ is simply the analogue of our process $({M}^N_F(t),{P}^N_F(t))$ when the gene is always active. 
\begin{lemma}\label{lem1}
\begin{enumerate}
\item  For the convergence in distribution for the uniform norm on compact sets
  \[
  \lim_{N\to+\infty} \left(\int_0^t \overline{M}^N(u)\,\diff u\right)=(\rho_2 t).
  \]
\item  For $T>0$,
  \[
  \sup_{N\geq 1}\E\left(\sup_{0\leq t\leq T} \overline{P}^N(t)\right)<+\infty.
  \]
\end{enumerate}
\end{lemma}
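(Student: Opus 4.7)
A crucial observation for both assertions is that the process $(\overline{M}^N(t))$ is Markov \emph{on its own}: its marginal transition rates, read off from \eqref{eqr1}, do not depend on the protein coordinate. It is therefore an $M/M/\infty$ queue with arrival rate $\lambda_2 N$ and per-server service rate $\mu_2 N$. By a standard time change, $(\overline{M}^N(t))$ has the same distribution as $(\widetilde{M}(Nt))$, where $(\widetilde{M}(s))$ is an $M/M/\infty$ queue with arrival rate $\lambda_2$ and service rate $\mu_2$ and initial state $m_0$. Its unique invariant distribution is Poisson with parameter $\rho_2=\lambda_2/\mu_2$.

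For Assertion 1, the plan is to invoke the ergodic theorem for the positive-recurrent Markov chain $(\widetilde{M}(s))$. After the time change, for every fixed $t>0$,
\[
\int_0^t \overline{M}^N(u)\,\diff u = \frac{1}{N}\int_0^{Nt} \widetilde{M}(s)\,\diff s \xrightarrow[N\to+\infty]{\text{a.s.}} \rho_2\,t.
\]
To upgrade pointwise almost sure convergence to uniform convergence on compact sets (hence convergence in distribution for the Skorohod topology), I would use a Pólya-type monotonicity argument: the maps $t\mapsto\int_0^t \overline{M}^N(u)\,\diff u$ are non-decreasing, converge pointwise along a countable dense subset of $[0,T]$ to the continuous function $t\mapsto\rho_2 t$, and this forces uniform convergence on $[0,T]$ almost surely. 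The only mildly delicate step is the simultaneous handling of the exceptional null set for the countable collection, which causes no difficulty.

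For Assertion 2, I would bound $\overline{P}^N$ by its total number of up-jumps. Let $J^N(T)$ denote the number of points of $\mathcal{N}_3^+$ in the region $\{(u,v): 0\leq u\leq T,\;0\leq v\leq \lambda_3 \overline{M}^N(u-)\}$. Since the down-jumps of $(\overline{P}^N(t))$ cannot raise the process above its current value,
\[
\sup_{0\leq t\leq T}\overline{P}^N(t)\leq p_0 + J^N(T).
\]
Conditioning on the trajectory of $(\overline{M}^N(u), u\le T)$ and using that $\mathcal{N}_3^+$ is an independent Poisson process of unit intensity,
\[
\E\bigl[J^N(T)\bigr] = \lambda_3\int_0^T \E\bigl[\overline{M}^N(u)\bigr]\,\diff u.
\]
The ODE for the mean of the $M/M/\infty$ queue gives the explicit expression $\E[\overline{M}^N(u)] = \rho_2+(m_0-\rho_2)e^{-\mu_2 N u}$, which is bounded by $\max(m_0,\rho_2)$ uniformly in $N$ and $u$. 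Consequently
\[
\E\Bigl[\sup_{0\leq t\leq T}\overline{P}^N(t)\Bigr]\leq p_0 + \lambda_3\,\max(m_0,\rho_2)\,T,
\]
which is the desired uniform bound.

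The main obstacle is the Pólya-type upgrade from pointwise to uniform convergence in Assertion 1; everything else reduces to explicit $M/M/\infty$ computations and a first-moment comparison with a Poisson counting process.
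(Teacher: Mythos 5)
Your proof is correct, but it takes a genuinely different route from the paper on both assertions. For Assertion 1, the paper also reduces to the time-changed $M/M/\infty$ queue $L_1(Nt)$, but instead of the ergodic theorem it integrates the semimartingale decomposition $L_1(Nt)=m_0+\lambda_2 Nt-\mu_2\int_0^{Nt}L_1(u)\,\diff u+{\cal M}_1^N(t)$ to get the exact identity $\int_0^t\overline{M}^N(u)\,\diff u=\rho_2 t+\mu_2^{-1}(m_0/N-L_1(Nt)/N+{\cal M}_1^N(t)/N)$, and then kills the error terms via Doob's inequality and a hitting-time estimate showing $(L_1(Nt)/N)\to 0$ uniformly on compacts; this gives uniform convergence directly, with no need for your P\'olya-type monotonicity upgrade (which is nonetheless a perfectly sound substitute, since the ergodic theorem applies: the chain is positive recurrent and the identity is integrable against the Poisson$(\rho_2)$ stationary law, and the almost-sure pointwise limit $\rho_2 t$ is continuous and monotone). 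For Assertion 2, the paper writes the evolution equation for $\overline{P}^N$, bounds the running supremum through Doob's inequality applied to the martingale part, and closes with Gronwall's lemma; your comparison $\sup_{t\le T}\overline{P}^N(t)\le p_0+J^N(T)$ with the birth-counting process, combined with the explicit formula $\E[\overline{M}^N(u)]=\rho_2+(m_0-\rho_2)e^{-\mu_2 Nu}\le\max(m_0,\rho_2)$, is more elementary and arguably cleaner, since it avoids maximal inequalities and Gronwall altogether while yielding the explicit uniform bound $p_0+\lambda_3\max(m_0,\rho_2)T$. The only point worth making explicit in your write-up is the justification that $\E[J^N(T)]$ equals the expectation of its compensator (monotone convergence along a localizing sequence, using that the compensator has finite mean); once that is said, both of your arguments are complete.
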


\begin{proof}
  From Relations~\eqref{eqr1},  it is easily seen that  the process $(\overline{M}^N(t))$ can be expressed $(L_1(Nt))$ where $(L_1(t))$ is an $M/M/\infty$ queue with arrival rate $\lambda_2$ and service rate $\mu_2$ with $L_1(0)=m_0$. See Chapter~6 of Robert~\cite{Robert} for example. Elementary stochastic calculus gives, for $t>0$,
  \begin{equation}\label{eqr2}
  L_1(Nt)=m_0+\lambda_2 N t-\mu_2\int_0^{Nt} L_1(u)\,\diff u +{\cal M}_1^N(t),
  \end{equation}
  where $({\cal M}_1^N(t))$ is a local martingale whose previsible increasing process is given by
  \[
\croc{{\cal M}_1^N}(t)=\lambda_2  N t+\mu_2  \int_0^{Nt}L_1(u)\,\diff u.
\]
Doobs'Inequality shows that the process $({\cal M}_1^N(t)/N)$ vanishes for the convergence in distribution as $N$ gets large. 

For $\eps>0$ and $x\in \N$, if
\[
T_{x}=\inf\{t\geq 0: L_1(u)\geq x\},
\]
Proposition~6.10 of Robert~\cite{Robert} shows the convergence in distribution
\[
\lim_{x\to+\infty} \frac{\rho_2^{x}}{(x-1)!} T_x=E_0
\]
where $E_0$ is an exponential random variable with parameter ${\mu_2\exp(-\rho_2)}$. This shows in particular the process $(L_1(Nt)/N)$ converges in distribution to $0$  for the uniform convergence on compact intervals since
\[
\P\left(\sum_{0\leq t\leq T} \frac{L_1(Nt)}{N}\geq \eps\right)\leq \P\left(T_{\lfloor \eps N\rfloor}\leq N T\right). 
\]
From Equation~\eqref{eqr2}, one gets
  \[
  \int_0^t \overline{M}^N(u)\,\diff u=\frac{1}{N}\int_0^{Nt} L_1(u)\,\diff u =\rho_2t + \frac{1}{\mu_2}\left(\frac{m_0}{N} - \frac{L_1(Nt)}{N} + \frac{{\cal M}_1^N(t)}{N}\right)
  \]
  and therefore assertion~1) of the lemma.

For the last assertion, the method is similar: one first write  the evolution equation
  \[
  \overline{P}^N(t)=  p_0+\lambda_3\int_0^t  \overline{M}^N(u)\,\diff u   -\mu_3\int_0^t\overline{P}^N(u)\,\diff u +{\cal M}_2^N(t),
  \]
 where $({\cal M}_2^N(t))$ is a local martingale whose previsible increasing process is given by
  \[
\croc{{\cal M}_2^N}(t)=\lambda_3\int_0^t  \overline{M}^N(u)\,\diff u +\mu_3  \int_0^{t}  \overline{P}^N(u)\,\diff u.
\]
Define  $\overline{P}^{N}_*(t){=}\sup\{\overline{P}^N(u): 0\leq u\leq t\}$, then for $0\leq t\leq T$
  \begin{multline}\label{eqr4}
\E\left( \overline{P}^N_*(t)\right)\leq   p_0+\lambda_3\E\left(\int_0^T  \overline{M}^N(u)\,\diff u\right) \\+\E\left(\sup_{0\leq u\leq t}|{\cal M}_2^N(u)|\right) +\mu_3\int_0^t\E\left(\overline{P}^N_*(u)\right)\,\diff u.
  \end{multline}
Doob's Inequality gives, for $t\leq T$,
\[
\E\left(\sup_{0\leq u\leq t}|{\cal M}_2^N(u)|\right)\leq 2\lambda_3\int_0^T\E\left( \overline{M}^N(u)\right)\,\diff u +2\mu_3  \int_0^{t} \E\left(\overline{P}^N_*(u)\right)\,\diff u,
\]
and from the ergodic theorem for $(L_1(t))$ (recall that $\overline{M}^N(t)=L_1(Nt)$) one gets
\[
\lim_{N\to+\infty} \E\left(\int_0^T  \overline{M}^N(u)\,\diff u\right)=\rho_3T.
\]
One concludes by using Equation~\eqref{eqr4} and Gronwall's Lemma. 
\end{proof}

\begin{proposition}\label{lem2}
The sequence $(P^N_F(t))$ is tight for the convergence in distribution of c\`adl\`ag processes.
\end{proposition}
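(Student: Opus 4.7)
The strategy is to prove tightness by a stochastic domination of $(P^N_F(t))$ by the ``gene-always-active'' process $(\overline{P}^N(t))$ of Definition~\ref{def1}, for which bounds are already available via Lemma~\ref{lem1}, combined with Aldous's tightness criterion on the Skorohod space.

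First I would realize both $(M^N_F,P^N_F)$ and $(\overline{M}^N,\overline{P}^N)$ on the common probability space carrying the Poisson measures $\cal{N}_i^{\pm}$ by using the SDEs~\eqref{o1}--\eqref{o2} together with the analogous SDE for $(\overline{M}^N,\overline{P}^N)$ in which the indicator $\ind{I^N_F(t-){=}1}$ in the $M$-birth term is replaced by $1$. Inspecting jump by jump then shows that every point of $\cal{N}_2^+$ incrementing $M^N_F$ also increments $\overline{M}^N$, and every point of $\cal{N}_2^-$ decrementing $M^N_F$ also decrements $\overline{M}^N$ (by monotonicity of the rates in the state), so the inequality $M^N_F(t) \le \overline{M}^N(t)$ is preserved for all $t$; exactly the same argument with $\cal{N}_3^{\pm}$ yields $P^N_F(t) \le \overline{P}^N(t)$ for all $t \ge 0$, almost surely.

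Pointwise tightness of $(P^N_F(t))$ is then immediate from Lemma~\ref{lem1}(2). For the oscillation estimate, applying~\eqref{SDE} with $f(i,m,p){=}p$ gives the semimartingale decomposition
\[
P^N_F(t) = p_0 + \int_0^t \left[\lambda_3 M^N_F(u) - \mu_3 P^N_F(u)\right] \diff u + W^N(t),
\]
where the martingale $(W^N(t))$ has previsible bracket $\croc{W^N}(t) = \int_0^t [\lambda_3 M^N_F(u) + \mu_3 P^N_F(u)]\diff u$. For a stopping time $\tau \le T$ and $h \le \delta$, the coupling bounds the finite-variation increment by $\lambda_3 \int_\tau^{\tau+h} \overline{M}^N(u)\diff u + \mu_3\, h \sup_{u \le T} \overline{P}^N(u)$; the first term is uniformly small in $\tau$ and $N$ by Lemma~\ref{lem1}(1), since the increasing function $t \mapsto \int_0^t \overline{M}^N(u)\diff u$ converges uniformly on compacts to the Lipschitz limit $t\mapsto \rho_2 t$, so its modulus of continuity is controlled, while the second is handled by Lemma~\ref{lem1}(2). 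For the martingale term, the optional sampling theorem and Chebyshev's inequality yield
\[
\P\left(\left|W^N(\tau{+}h) - W^N(\tau)\right| > \eta\right) \le \eta^{-2}\, \E\left[\croc{W^N}(\tau{+}h) - \croc{W^N}(\tau)\right],
\]
and the same two ingredients make the right-hand side small uniformly in $\tau$ and $N$ when $\delta$ is small. Aldous's criterion then yields the claimed tightness.

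The delicate point is the coupling step: since the death rates of both $M^N_F$ and $\overline{M}^N$ depend on their own current value, one has to argue by induction on the successive jumps driven by $\cal{N}_2^{\pm}$ (and then $\cal{N}_3^{\pm}$) that neither a ``spurious'' $M$-birth seen only by $\overline{M}^N$ (occurring when $I^N_F{=}0$) nor a ``spurious'' $M$-death seen only by $\overline{M}^N$ (occurring when $\overline{M}^N{>}M^N_F$) can break the ordering. Once this monotone coupling is secured, the remainder reduces to Lemma~\ref{lem1} plus textbook martingale inequalities.
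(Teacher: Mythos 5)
Your proposal is correct and follows the same skeleton as the paper's proof: a monotone coupling through the common Poisson measures giving $M^N_F(t)\leq \overline{M}^N(t)$ and $P^N_F(t)\leq\overline{P}^N(t)$, the bounds of Lemma~\ref{lem1} on the dominating process, and Aldous' criterion. The only place where you diverge is the mechanics of the increment estimate: the paper exploits the fact that $(P^N_F(t))$ is integer-valued with $\pm1$ jumps, so $\{|P^N_F(\tau_2)-P^N_F(\tau_1)|\geq 1\}$ forces at least one birth or death on $[\tau_1,\tau_2]$, whose complementary probability is the expectation of $\exp(-\lambda_3\int_{\tau_1}^{\tau_2}M^N_F)\,$ (resp.\ $\exp(-\mu_3\int_{\tau_1}^{\tau_2}P^N_F)$); you instead write the semimartingale decomposition of $(P^N_F(t))$ and control the drift by the coupling and the martingale part by optional sampling plus Chebyshev on the bracket. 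Both work; the paper's version has the small advantage that the quantities to be controlled sit inside a bounded exponential, whereas your Chebyshev bound needs a first-moment estimate of $\int_\tau^{\tau+h}\overline{M}^N(u)\,\diff u$ uniform over stopping times $\tau$ — this does hold (e.g.\ by the strong Markov property at $\tau$ and the explicit mean of the $M/M/\infty$ queue, or by upgrading the uniform convergence of Lemma~\ref{lem1}(1) to convergence in $L^1$), but it deserves a line of justification rather than a direct appeal to Lemma~\ref{lem1}(1), which is only a statement about convergence in distribution.
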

\begin{proof}
  Aldous' criterion for tightness is used. See Theorem~4.5 page~320   of Jacod and Shiryaev~\cite{Jacod} for example.  For $T>0$, one denotes by ${\cal T}_T$ the set of stopping times associated to the filtration $({\cal F}_t)$ which are bounded by $T$. For $\eta>0$, let $\tau_1$, $\tau_2\in{\cal T}_T$ be such that $\tau_1\leq \tau_2\leq \tau_1+\eta$.  The respective probabilities that, on the time interval $[\tau_1,\tau_2]$,  no protein is made or that no protein is degraded are respectively  given by
  \[
  \E\left(\exp\left(-\lambda_3\int_{\tau_1}^{\tau_2} M^N_F(u)\,\diff u\right)\right)  \text{ and }
  \E\left(\exp\left(-\mu_3\int_{\tau_1}^{\tau_2} P^N_F(u)\,\diff u\right)\right)  
  \]
By using the strong Markov property, one gets the relation 
\begin{multline*}
\P\left(|P^N_F(\tau_1)-P^N_F(\tau_2)|\geq 1 \right)\leq 1-\E\left(\exp\left(-\lambda_3\int_{\tau_1}^{\tau_2} M^N_F(u)\,\diff u\right)\right)\\+1-\E\left(\exp\left(-\mu_3 \int_{\tau_1}^{\tau_2}P^N_F(u)\,\diff u\right)\right).
\end{multline*}
With a simple coupling using the same Poisson processes ${\cal N}^{+/-}_{2/3}$ of Equations~\eqref{o1} and~\eqref{o2}   gives a process as in  Definition~\ref{def1}  on the same probability space  such that the relations $M^N_F(t)\leq \overline{M}^N(t)$ and $P^N_F(t)\leq \overline{P}^N(t)$ hold almost surely for all $t\geq0$.  From the last relation, one gets the inequality 
\begin{align*}
  \P\left(|P^N_F(\tau_1)-P^N_F(\tau_2)|\geq 1 \right)&\leq 1-\E\left(\exp\left(-\lambda_3\int_{\tau_1}^{\tau_1+\eta} \overline{M}^N(u)\,\diff u\right)\right)\\&\hspace{30mm} +1-\E\left(\exp\left(-\mu_3 \eta\sup_{0\leq t\leq T} \overline{P}^N(t)\right)\right)\\
&\leq 1-\E\left(\exp\left(-\lambda_3 \sup_{0\leq t\leq T}\int_{t}^{t+\eta} \overline{M}^N(u)\,\diff u\right)\right)\\&\hspace{30mm}\hfill +1-\E\left(\exp\left(-\mu_3 \eta\sup_{0\leq t\leq T} \overline{P}^N(t)\right)\right).
\end{align*}
Lemma~\ref{lem1} gives the relation
\[
\lim_{N\to+\infty} \sup_{\tau_1\in{\cal T}_T} \E\left(\exp\left(-\lambda_3\int_{\tau_1}^{\tau_1+\eta} \overline{M}^N(u)\,\diff u\right)\right)= e^{-\lambda_3\rho_2\eta}
\]
and, for $\eps>0$, the existence of $K>0$ such that
\[
\sup_{N\geq 1} \P\left(\sup_{0\leq t\leq T} \overline{P}^N(t) \geq K \right)\leq \eps.
\]
Consequently
\[
\lim_{\eta\to 0}\lim_{N\to+\infty} \sup_{\substack{\tau_1,\tau_2\in{\cal T}_T\\\tau_1\leq \leq\tau_2\leq \tau_1+\eta}}   \P\left(|P^N_F(\tau_1)-P^N_F(\tau_2)|\geq 1 \right)=0,
\]
hence,  by Aldous' criterion, the tightness of the sequence $(P^N_F(t))$ is established. The proposition is proved.

\end{proof}
\subsection{Convergence of Occupation Measures}\label{OccMeas}
For $N\geq 1$ and $T>0$, one defines the random measure $\Lambda^N$ on ${\cal E}_T\stackrel{\text{def.}}{=}\{0,1\}\times\N^2\times[0,T]$  as follows,  for a non-negative Borelian function $G$ on ${\cal E}_T$, 
\[
\Lambda^N(G)=\int_0^{T} G(X^N_F(u),u)\,\diff u.
\]
If $A$ is a Borelian subset of ${\cal E}_T$,  $\Lambda^N(A)$ denotes $\Lambda^N(\mathbbm{1}_A)$. 
\begin{proposition}\label{occmes}
The sequence $\Lambda^N$ of random measures is tight and any of its limiting points $\Lambda$ can be written as
\[
\Lambda(F)=\sum_{(i,m,p)\in{\cal S}}\int_0^{T} G(i,m,p,u)\pi_{p}(i,m)\nu_u(p)\,\diff u.
\]
where, for any $u\leq T$, $\nu_u$ is a positive measure on $\N$ such that, almost surely,
\[
\int_0^t \nu_u(\N)\,\diff u=t,\quad \forall t\leq T,
\]
and, for $p\in\N$, $\pi_{p}$ is the invariant distribution of the Markov process on $\{0,1\}\times \N$ whose transition rates are given by, for $(i,m)\in\{0,1\}\times \N$,
\begin{equation}\label{MRapide}
\begin{cases}
(i,m)\rightarrow (1-i,m) \quad\text{ at rate } \lambda_1^+ i+\lambda_1^-p(1-i),\\
(i,m)\rightarrow (i,m+1) \text{\phantom{at rate}} \lambda_2 i,\\
 (i,m)\rightarrow (i,m-1) \text{\phantom{at rate}} \mu_2m. 
\end{cases}
\end{equation}
Additionally, one has 
\begin{equation}\label{eqEM}
\sum_{(i,m)\in\{0,1\}\times\N} m\pi_p(i,m)=\frac{\lambda_1^+}{\lambda_1^++\lambda_1^-p}\frac{\lambda_2}{\mu_2}.
\end{equation}
\end{proposition}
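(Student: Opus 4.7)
The plan is to establish the three claims in turn: tightness of $(\Lambda^N)$, the factorization of any limiting point, and the explicit formula \eqref{eqEM}. Since the total mass of $\Lambda^N$ is deterministic and equal to $T$, tightness on the Polish space $\cal{E}_T$ reduces to uniform tail control in the $m$ and $p$ coordinates. Using the coupling with the process $(\overline{M}^N,\overline{P}^N)$ of Definition~\ref{def1} introduced in the proof of Proposition~\ref{lem2}, together with Markov's inequality, one gets
\[
\E\bigl(\Lambda^N(\{m\ge K_1\})\bigr)\le \frac{1}{K_1}\E\biggl(\int_0^T \overline{M}^N(u)\,\diff u\biggr),\qquad \E\bigl(\Lambda^N(\{p\ge K_2\})\bigr)\le \frac{T}{K_2}\E\Bigl(\sup_{[0,T]}\overline{P}^N\Bigr),
\]
and both bounds are uniformly small in $N$ by Lemma~\ref{lem1}.

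For the identification of any limit, fix a bounded function $g:\{0,1\}\times\N\to\R$ with finite support and denote by $\Omega_p$ the generator of the fast Markov process \eqref{MRapide} with $p$ frozen. Since $g$ does not depend on the third coordinate, Dynkin's formula applied to $(i,m,p)\mapsto g(i,m)$ for $(X^N_F(t))$ with its full generator gives
\[
\int_0^t N\Omega_{P^N_F(u)} g(I^N_F(u), M^N_F(u))\,\diff u= g(I^N_F(t),M^N_F(t))-g(I^N_F(0),M^N_F(0))-W^N_g(t),
\]
where $(W^N_g(t))$ is a local martingale whose bracket is of order $N$. Dividing by $N$, and using the boundedness of $g$ for the boundary terms and Doob's inequality for the martingale, the right-hand side vanishes as $N\to+\infty$. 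Rewriting the left-hand side via the occupation measure and passing to the limit along the convergent subsequence yields, for any limit point $\Lambda$,
\[
\int_{\cal{E}_T}\Omega_p g(i,m)\mathbbm{1}_{[0,t]}(u)\,\Lambda(\diff z)=0,\qquad t\le T.
\]
Since the time marginal of $\Lambda$ is Lebesgue, one disintegrates $\Lambda(\diff z)=\gamma_u(\diff i,\diff m,\diff p)\,\diff u$, and then further as $\gamma_u=\kappa_u^p\otimes\nu_u$ over the $p$-marginal $\nu_u$. The above identity forces $\kappa_u^p$ to be invariant for $\Omega_p$ for $\nu_u$-almost every $p$. Since the process \eqref{MRapide} is ergodic (the $i$-coordinate is a two-state chain and, conditionally on the gene being active, $m$ follows an $M/M/\infty$-type dynamics), its unique invariant probability is $\pi_p$, hence $\kappa_u^p=\pi_p$, which gives the factorization.

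Finally, Relation~\eqref{eqEM} is read off from the balance equations for $\pi_p$: applying invariance to $g_1(i,m)=i$ yields that the marginal on $\{0,1\}$ is Bernoulli with parameter $\lambda_1^+/(\lambda_1^++\lambda_1^- p)$, while applying it to $g_2(i,m)=m$ (regularized by truncation $m\wedge K$ and then letting $K\to+\infty$ after checking the integrability of $M$ under $\pi_p$) yields $\lambda_2\pi_p(I{=}1)=\mu_2 \E_{\pi_p}(M)$, and combining these gives the formula. The main obstacle I anticipate is the careful handling of the unbounded coordinate $p$: the rate $\lambda_1^- p$ appearing in $\Omega_p$ is unbounded, so the convergence of the integrals against $\Lambda^N$ needs to be justified using the uniform tail bounds obtained in the tightness step, and the integrability of $M$ under $\pi_p$ that validates the truncation argument requires an ad hoc $M/M/\infty$-type estimate on the fast component.
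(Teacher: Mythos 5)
Your proposal is correct and follows essentially the same route as the paper: tightness via the coupling with $(\overline{M}^N,\overline{P}^N)$ and Lemma~\ref{lem1}, identification of the limit by passing the (rescaled) martingale relation for test functions of $(i,m)$ to the limit so that the disintegrated measure must be invariant for the frozen-$p$ fast generator, and Relation~\eqref{eqEM} from the flow-balance equations of $\pi_p$. The only differences are cosmetic (Markov's inequality in place of the paper's direct tail bounds, disintegration language in place of the paper's Radon--Nikodym density $\ell_u$), and your flagged concern about the unbounded rate $\lambda_1^- p$ is handled exactly as in the paper by working with finitely supported test functions.
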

\begin{proof}
For $K{>}0$, if ${\cal K}_K$ is the compact subset $\{0,1\}{\times}[0,K]^2{\times}[0,T]$ of ${\cal E}_T$, then
\[
\E\left(\Lambda^N\left({\cal E}_T{\setminus}{\cal K}_K\right)\right) \leq \int_0^T\P\left(M_F^N(u)\geq K\right) \,\diff u +T\P\left(\sup_{0\leq u\leq T} P_F^N(u)\leq K\right).
\]
By using the same coupling as in the proof of Proposition~\ref{lem2}, one gets that
\[
\E\left(\Lambda^N\left({\cal E}_T{\setminus}{\cal K}_K\right)\right) \leq \int_0^T\P\left(\overline{M}^N(u)\geq K\right) \,\diff u +T\P\left(\sup_{0\leq u\leq T} \overline{P}^N(u)\leq K\right).
\]
By Lemma~\ref{lem1}, for $\eps>0$, there exists some $K$ such that
\[
\sup_{N\geq 1} \E\left(\Lambda^N\left({\cal E}_T{\setminus}{\cal K}_K\right)\right) \leq \eps. 
\]
 Consequently, the sequence $(\Lambda^N)$ of random Radon measures on ${\cal E}_T$ is tight. See Dawson~\cite[Lemma~3.28, page~44]{Dawson} for example.

Let $\Lambda$ be a limiting point of some subsequence $(\Lambda^{N_k}(\cdot))$. By using Radon-Nikodym's Theorem,  see Chapter~8 of Rudin~\cite{Rudin:01} for example,  it is not difficult to see that there exists some non-negative  random variables $(\ell_u(x)(\omega), (\omega,x,u)\in\Omega\times{\cal S}\times[0,T])$ such that
$(\omega,x,u)\mapsto \ell_u(x)(\omega)$ is measurable and $\Lambda$ can be expressed as
\[
\Lambda(G)=\sum_{x\in{\cal S}}\int_0^T G(x,u)\ell_u(x)\,\diff u.
\]
From the domination relation of Lemma~\ref{lem1}, one gets that, almost surely, there is no loss of mass, i.e.
\begin{equation}\label{mass}
\int_0^t \ell_u({\cal S})\,\diff u=t, \quad \forall t\leq T,
\end{equation}
holds almost surely. 
Now take  a function $f$ with bounded support on ${\cal S}$, by using Equation~\eqref{crocW}, it is not difficult to show that the process $(\langle{W_f^N}\rangle(t))$ satisfies the relation
\[
\lim_{N\to+\infty} \frac{1}{N^2}\E\left(\croc{W^N_f}(T)\right)=0,
\]
by Doob's Inequality this implies that the martingale $(W_f^N(t)/N)$ converges in distribution to $0$ for the uniform norm on $[0,T]$. 

By dividing Relation~\eqref{SDE} by $N$, one gets that, for the convergence in distribution, the relation 
\begin{multline*}
\lim_{N\to+\infty} \left(\int_0^t \lambda_1^+\left(1{-}I^N_F(u)\right) \Delta_1(f)(X^N_F(u))\,\diff u\right.\\{+}\int_0^t \lambda_1P^N_F(u) I^N_F(u)\Delta_1(f)(X^N_F(u))\,\diff u\\\left.
{+}\int_0^t \lambda_2 I^N_F(u) \Delta_2^+(f)(X^N_F(u))\,\diff u{+}\int_0^t \mu_2 M^N_F(u) \Delta_2^-(f)(X^N_F(u))\,\diff u\right){=}0.
\end{multline*}
holds.  The convergence of the sequence $(\Lambda^{N_k})$ gives that the relation 
\begin{multline*}
\sum_{x=(i,m,p)\in{\cal S}} \int_0^t \lambda_1^+ (1{-}i)\Delta_1(f)(x)\ell_u(x)\,\diff u{+}\int_0^t \lambda_1^- i p \Delta_1(f)(x)\ell_u(x)\,\diff u\\
+\int_0^t \lambda_2 i \Delta_2^+(f)(x)\ell_u(x)\,\diff u+\int_0^t \mu_2 m\Delta_2^-(f)(x)\ell_u(x)\,\diff u=0
\end{multline*}
holds almost surely for all $0\leq t\leq T$ and for all indicator functions of elements ${\cal S}$. 
Now, for $p\in\N$ and $g$  a function with finite support on $\{0,1\}\times\N$,   define $f(i,m,p)=g(i,m)$, the above relation gives
\begin{multline}\label{InvAux}
\sum_{x=(i,m,p)\in{\cal S}}  \ell_u(i,m,p) \lambda_1^+ (1{-}i)\Delta_1(g)(i,m){+} \ell_u(i,m,p)\lambda_1^- i p\Delta_1(g)(i,m)\\
+ \ell_u(i,m,p)\lambda_2 i \Delta_2^+(g)(i,m)+ \ell_u(i,m,p)\mu_2 m\Delta_2^-(g)(i,m)=0
\end{multline}
holds almost surely for all $u\in{\cal A}\subset [0,T]$ and $[0,T]{-}{\cal A}$ is  negligible for Lebesgue measure.
Relation~\eqref{InvAux} shows that for $u\in{\cal A}$, the vector $(\ell_u(i,m,p))$ is proportional to the invariant distribution $\pi_{p}$ of the Markov process on $\{0,1\}\times \N$ whose transition rates are given by Relations~\eqref{MRapide}.

One gets therefore the existence of a constant $\nu_u(p)$ such that $\ell_u(i,m,p)=\nu_u(p)\pi_p(i,m)$ for all $(i,m,p)\in{\cal S}$.  Equation~\eqref{mass} gives the relation
\[
\int_0^t \nu_u(\N)\,\diff u=t, \quad \forall t\leq T.
\]
Hence one has $\nu_u(\N)=1$  almost surely for all $u\in{\cal A}_1\subset [0,T]$ and $[0,T]{-}{\cal A}_1$ is  negligible for Lebesgue measure.

Straightforward calculations as in the proof of Proposition~\ref{FeedProp} complete the proof of the proposition to give Relation~\eqref{eqEM}.
\end{proof}

\end{document}